\newsavebox{\measure@tikzpicture}
  \def\tikz@width{#1}%
  \def\tikzscale{1}\begin{lrbox}{\measure@tikzpicture}%
  \edef\tikzscale{\pgfmathresult}%
\title{Embedding Graphs as Euclidean \texorpdfstring{$k$}{k}NN-Graphs}
\author{Thomas Schibler}{University of California, Santa Barbara, CA, USA}{tschibler@ucsb.edu}{https://orcid.org/0009-0008-2966-9468}{}
\author{Subhash Suri}{University of California Santa Barbara, CA, USA}{suri@ucsb.edu}{https://orcid.org/0000-0002-5668-7521}{}
\author{Jie Xue}{New York University Shanghai, China}{jiexue@nyu.edu}{https://orcid.org/0000-0001-7015-1988}{}
\authorrunning{T. Schibler, S. Suri, and J. Xue}
\keywords{Geometric graphs, \texorpdfstring{$k$}{k}-nearest neighbors, graph embedding, approximation algorithms}
\begin{document}

\maketitle

\begin{abstract}
Let $G=(V,E)$ be a directed graph on $n$ vertices where each vertex has out-degree $k$.
We say that $G$ is $k$NN-realizable in $d$-dimensional Euclidean space if there exists a point set $P = \{p_1, p_2, \ldots, p_n\}$ in $\mathbb{R}^d$ along with a one-to-one mapping $\phi: V \rightarrow P$ such that for any $u,v \in V$, $u$ is an out-neighbor of $v$ in $G$ if and only if $\phi(u)$ is one of the $k$ nearest neighbors of $\phi(v)$; we call the map $\phi$ a \textit{$k$NN-realization} of $G$ in $\mathbb{R}^d$.
The $k$NN-realization problem, which aims to compute a $k$NN-realization of an input graph in $\mathbb{R}^d$, is known to be NP-hard already for $d=2$ and $k=1$ [Eades and Whitesides, Theoretical Computer Science, 1996], and to the best of our knowledge has not been studied in dimension $d=1$. The main results of this paper are the following:
\smallskip
\begin{itemize}
    \item For any fixed dimension $d \geq 2$, we can efficiently compute an embedding realizing at least a $1 - \varepsilon$ fraction of $G$'s edges, or conclude that $G$ is not $k$NN-realizable in $\mathbb{R}^d$.
    \smallskip
    \item For $d=1$, we can decide in $O(kn)$ time whether $G$ is $k$NN-realizable and, if so, compute a realization in $O(n^{2.5} \mathsf{poly}(\log n))$ time.
\end{itemize}
\end{abstract}

\section{Introduction}

The \textit{$k$NN-graph} of a set $P$ of points in $\mathbb{R}^d$ is a directed graph with vertex set $P$ and edges defined as follows: there is a directed edge from $a \in P$ to $b \in P$ if and only if $b$ is one of the $k$-nearest neighbors of $a$ in $P$ (under the Euclidean distance).
A directed graph $G = (V, E)$ is \textit{$k$NN-realizable} in $\mathbb{R}^d$ if it is isomorphic to the $k$NN-graph of a set of points in $\mathbb{R}^d$.
In this paper, we consider the following natural problems regarding $k$NN-graphs:
Can we efficiently check whether a given directed graph is $k$NN-realizable in $\mathbb{R}^d$?
If so, can we efficiently find a $k$NN-realization of $G$?
More formally, a $k$NN-\emph{realization} (or $k$NN-\emph{embedding}) of $G$ in $\mathbb{R}^d$ is a one-to-one mapping $\phi: V \rightarrow P$ from the vertex set $V$ of $G$ to a set $P$ of points in $\mathbb{R}^d$ that induces an isomorphism between $G$ and the $k$NN-graph of $P$.
Throughout the paper, we use the terms embedding and realization interchangeably.

The $k$NN-graph is a member of the well-known family of \emph{proximity} graphs in computational geometry that includes minimum spanning trees, relative neighborhood graphs, Gabriel graphs, and Delaunay triangulations~\cite{cgbook}. Over the past several decades, a substantial research effort has been directed to design efficient algorithms to compute these structures for an input set of points.  The \emph{inverse} problem -- the focus of our paper -- where we want to recover the points that produce a given proximity graph, however, remains  less well understood.

An early example of a positive result in this direction is on Euclidean minimum spanning trees. Monma and Suri~\cite{monma:1992} show that any tree with maximum node degree $5$ can be realized as a minimum spanning tree of points in the plane and, additionally, every planar point set admits a minimum spanning tree with degree at most $5$. This settles the above problem for \textit{non-degenerate} planar point sets but if we allow co-circularities, then a planar minimum spanning tree can have maximum node degree $6$; in that case, the problem of computing a 1NN-realization was proved to be NP-complete by Eades and Whitesides~\cite{eades:1994}.
Dillencourt~\cite{dillencourt:1990} considers the problem of recognizing triangulations that can be realized as Delaunay triangulations in the plane, and it is also known that all outerplanar triangulations are realizable~\cite{sugihara:1994}. (These proofs are non-constructive, however, and only exponential time algorithms are known for computing the coordinates of the points in the realization~\cite{agrawal:2022}.)
Among other related results, Bose et al.~\cite{Jit-prox} give a characterization of  \emph{trees} that can be realized in the plane as relative neighborhood or Gabriel graphs, and Eppstein et al.~\cite{Epps-nn} establish some  structural properties of $k$NN graphs of random points in the plane.

The $k$NN-realization is not directly related to the
\emph{metric embedding} problem but there are some obvious similarities. The input to metric embedding is a weighted graph satisfying triangle inequalities and the goal is to find an embedding realizing all pairwise distances. This is not always possible~\cite{matousek-book} -- there are simple metrics that cannot be embedded in any finite dimensional Euclidean space. However, if we allow some \emph{distortion} (i.e. approximation) of distances, then any $n$-node metric graph can be embedded in $O(\log n)$ dimensional space with polylog distortion~\cite{bourgain}, or using the celebrated Johnson-Lindenstrauss theorem~\cite{johnson:1984} in dimension $O(\log n /\varepsilon^2 )$ with distortion $1 + \varepsilon$ for any fixed constant $\varepsilon > 0$.

In metric embedding the goal is to realize all pairwise distances (approximately), while in $k$NN-realization the goal is only to preserve all \emph{ordinal} neighbor relations in a \emph{directed} graph: a neighbor of each vertex must be closer than any of its non-neighbors but the choice of specific distances does not matter.
Although  there is some work in embedding ordinal relations as well, the main focus is specialized metric spaces and \emph{bounds} on \emph{ordinal distortion}.
In particular, Alon et al. \cite{alon2008} consider embeddings with ordinal relaxations into ultrametric spaces, and Bădoiu et al. \cite{badoiu2008} improve their bounds for tree and line embeddings.
A different line of research concerns \emph{triplets} embedding, where the input is a set of triples $(a, b, c)$ specifying constraints of the form $d(a, b) < d(a, c)$.
Even for embedding in one dimension $\mathbb{R}^1$, the general triplet problem as well as the related ``betweenness'' problem is MAXSNP-hard~\cite{ChorSudan}, and a FPTAS is known for maximizing the number of satisfied triplet constraints~\cite{fan2020}.
(If all $\binom{n}{3}$ triplet constraints are given, then the problem is trivial to solve -- indeed, once the leftmost point is determined, the rest of the ordering can be easily decided.)
In contrast with the general triplet constraints, the pairwise relations in a $k$NN-realization problem in $\mathbb{R}^1$ have a richer structure that allows us to solve the problem efficiently. 

The $k$NN-realization is also related to the \emph{sphericity} of graphs~\cite{maehara:1984}, where the goal is to embed a (undirected) graph $G=(V,E)$ into an Euclidean space $\mathbb{R}^k$ such that there is a point $p(u)$ for each vertex $u \in V$, and $d(p(u), p(v)) \leq 1$ iff $(u, v) \in E$.  The smallest dimension $k$ that admits such an embedding is called the \emph{sphericity} of $G$~\cite{maehara:1984}. (Another related problem is the \emph{dimension} of a graph, introduced by Erd\H{o}s, Harary and Tutte~\cite{erdos:65}: it is the smallest number $k$ such that $G$ can be embedded into Euclidean $k$-space with every edge of $G$ having length $1$.) Along these lines, Chatziafratis and Indyk~\cite{chatziafratis2024} also show that if we want to preserve the relative distances among the $k$ nearest neighbors of each point, then the embedding dimension must grow linearly with $k$. Unlike these results, our work is algorithmic -- we design efficient algorithms to realize a given graph $G$ in a specified dimension $d$.

Euclidean embedding of neighbor relations is also studied in social sciences for geometric realization of preference orderings. Given a set $V$ of $n$ voters, a set $C$ of $m$ candidates, and a rank ordering of the candidates by each voter, we say that the preference graph can be realized in Euclidean $d$-space if each voter's preferences are consistent with its Euclidean distances to all the candidates. In this case, it is known that the smallest dimension must satisfy $d \geq \min \{n , m-1\}$~\cite{bogomolnaia:2007}.

\subsection{Main results}

We study the \textit{$k$NN-realization} problem, which takes a directed graph $G$ as an input and aims to compute a $k$NN-realization of $G$ in $\mathbb{R}^d$ (or decide the nonexistence of such a realization).
Throughout we focus on the \emph{unranked} $k$NN-realization problem where nearest neighbors are realized as an \emph{unordered set} but our results also hold (with minor caveat) for the ranked version where edges of $G$ specify each of the $k$ neighbors in order (see Concluding Remarks). 
Our two main results are the following.

\begin{enumerate}
\item Assuming that $G$ is $k$NN-realizable in $\mathbb{R}^d$, we can find a $d$-dimensional realization in polynomial time that preserves at least a $1 - \varepsilon$ fraction of the edges of $G$, for any fixed $\varepsilon > 0$.
If our algorithm fails, we can also conclude that $G$ is \emph{not} $k$NN-realizable in $\mathbb{R}^d$.
In particular, our algorithm is an EPTAS (\textit{efficient polynomial-time approximation scheme}) for the $k$NN-realization problem for fixed $k$ and $d$.

\smallskip

\item We give a \textit{linear-time} algorithm to decide wether $G$ is $k$NN-realizable in $\mathbb{R}^1$.
Specifically, the algorithm runs in $O(kn)$ time, which is linear in the size of the input graph.
If $G$ is realizable, our algorithm can compute a realization in $O(n^{2.5} \mathsf{poly}(\log n))$ time.
\end{enumerate}

\subsection{Basic definitions}

For a (directed or undirected) graph $G$, we use $V(G)$ and $E(G)$ to denote the set of its vertices and edges, respectively. If $G$ is a directed graph, we say $G$ is \textit{$k$-regular} if the out-degree of every vertex of $G$ is exactly equal to $k$. We use $\mathsf{in}[v]$ (resp., $\mathsf{out}[v]$) to denote the set consisting of $v$ itself and all in-neighbors (resp., out-neighbors) of $v$.
A \textit{$k$NN-realization} of $G = (V,E)$ in a metric space $\mathcal{M} = (M,\mathsf{dist})$ is an injective map $\phi: V \rightarrow M$ such that  $\mathsf{dist}(\phi(v),\phi(u)) < \mathsf{dist}(\phi(v),\phi(u'))$, for any distinct triple $v,u,u' \in V$ where $(v,u) \in E$ and $(v,u') \notin E$.
We say $G$ is \textit{$k$NN-realizable} in $\mathcal{M}$ if there exists a $k$NN-realization of $G$ in $\mathcal{M}$.

\section{Approximate \texorpdfstring{$k$}{k}NN-realization in \texorpdfstring{$\mathbb{R}^d$}{Rd}}

We first observe that it is easy to decide in polynomial time whether a given graph $G$ is $k$NN-realizable in a finite-dimensional Euclidean space.
We just have to check the acyclicity of an auxiliary graph defined as follows. Let $\varLambda_G$ be a directed graph whose vertices correspond to pairs of vertices in $V(G)$ with a directed edge $(\{v,u\},\{v,u'\})$ for every distinct triple $v,u,u' \in V(G)$ where $(v,u) \in E(G)$ and $(v,u') \notin E(G)$.
Intuitively, $\varLambda_G$ encodes the $\leq$-relation among the pairwise distances of points in any (potential) $k$NN-realization of $G$.
If $\varLambda_G$ has a directed cycle, then clearly $G$ is not $k$NN-realizable.
Otherwise, a topological sort of $\varLambda_G$ gives a total ordering of the vertex pairs in $G$.
With this ordering, we can find a $k$NN-realization of $G$ in $\mathbb{R}^n$ using the result of Bilu and Linial~\cite{bilu2005}.

On the other hand, deciding whether $G$ is $k$NN-realizable in $\mathbb{R}^d$, for a specific dimension $d$, is $NP$-hard. This is shown by Eades and Whitesides~\cite{eades:1996} who proved the hardness for $d=2$ and $k=1$.
Therefore, in this section, we explore an \textit{approximation} algorithm for the realization problem in any fixed dimension $d$.
We first need to define an approximate solution to our problem.
There is a natural way to measure how well a map $\phi: V(G) \rightarrow \mathbb{R}^d$ approximates a $k$NN-realization: randomly sample an edge $(u,v) \in E(G)$ and consider the probability that $\phi(v)$ is among the $k$ nearest neighbors of $\phi(u)$.
Formally, we introduce the following definition.
\begin{definition}[approximate $k$NN-realization] \label{def-apprxreal}
Let $G$ be a $k$-regular directed graph.
For $c \in [0,1]$, a map $\phi:V(G) \rightarrow \mathbb{R}^d$ is a \textbf{$c$-approximate $k$NN-realization} of $G$ in $\mathbb{R}^d$ if 
\begin{equation*}
    \sum_{(u,v) \in E(G)} \sigma_\phi(u,v) \geq c \cdot |E(G)|,
\end{equation*}
where $\sigma_\phi: V(G) \times V(G) \rightarrow \{0,1\}$ is the indicator function defined as $\sigma_\phi(u,v) = 1$ if we have $|\{v' \in V(G) \backslash \{u\}: \lVert \phi(u)-\phi(v') \rVert_2 \leq \lVert \phi(u)-\phi(v) \rVert_2 \}| \leq k$ and $\sigma_\phi(u,v) = 0$ otherwise.
\end{definition}

Our main result is an algorithm for computing a $(1-\varepsilon)$-approximate $k$NN-realization of $G$ in $\mathbb{R}^d$ with time complexity $f(k,d,\varepsilon) \cdot n^{O(1)}$, for any given $\varepsilon > 0$ (provided that $G$  is $k$NN-realizable in $\mathbb{R}^d$), where $f$ is some computable function.
In other words, for fixed $k$ and $d$, we obtain an \textit{efficient polynomial-time approximation scheme} (EPTAS) for the $k$NN-realization problem in $\mathbb{R}^d$.

At a high level, our algorithm consists of two main steps.
In the first step, it computes a set $E \subseteq E(G)$ of edges such that $|E| \leq \varepsilon |E(G)|$ and each weakly-connected component of $G - E$ contains $O_\varepsilon(1)$ vertices.
The existence of $E$ follows from the result of Miller et al.~\cite{miller:1997} on graph separators, and the computation of $E$ relies on the approximate minimum cut algorithm of Chuzhoy et al.~\cite{chuzhoy2020deterministic}.
The edges in $E$ are the ones we sacrifice in our approximation and so, in the second step, our algorithm computes a map $\phi:V(G) \rightarrow \mathbb{R}^d$ satisfying $\sigma_\phi(u,v) = 1$ for all edges $(u,v) \in E(G) \backslash E$.
This turns out to be easy, since the weakly-connected components of $G - E$ are of size $O_\varepsilon(1)$ and we can work on these components individually.
These two steps will be presented in Sections~\ref{sec-split} and~\ref{sec-component}, respectively.

\subsection{Splitting the graph by removing few edges} \label{sec-split}

A \textit{balanced cut} of an \textit{undirected} graph $H$ is a subset $E \subseteq E(H)$ such that every connected component of $H- E$ contains at most $|V(H)|/2$ vertices.
Every directed graph $G$ naturally corresponds to an undirected graph $G_0$ defined as $V(G_0) = V(G)$ and $E(G_0) = \{\{u,v\}: (u,v) \in E(G) \text{ or } (v,u) \in E(G)\}$.
We need the following important result.
\begin{lemma} \label{lem-smallcut}
Let $G$ be a directed graph that is $k$NN-realizable in $\mathbb{R}^d$, and $G_0$ be the undirected graph corresponding to $G$.
Then any subgraph $H$ of $G_0$ with $|V(H)| \geq 2$ admits a balanced cut of size $O(|V(H)|^{1-\frac{1}{d}})$, where the constant hidden in $O(\cdot)$ only depends on $k$ and $d$.
\end{lemma}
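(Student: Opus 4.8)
The plan is to reduce the claim to the sphere-separator theorem of Miller, Teng, Thurston and Vavasis~\cite{miller:1997} for bounded-ply neighborhood systems. Fix a $k$NN-realization $\phi\colon V(G)\to P\subseteq\mathbb{R}^d$ of $G$; we may assume $|V(G)|>k$, since otherwise $G_0$ has at most $k$ vertices and every subgraph $H$ admits the trivial balanced cut $E(H)$, of size at most $\binom{k}{2}=O_{k,d}(1)$. For a vertex $u$ let $\rho_u$ be the distance from $\phi(u)$ to its farthest out-neighbor (set $\rho_u=0$ if $u$ has out-degree $0$), and let $B_u=\overline{B}(\phi(u),\rho_u)$. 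The proof rests on two facts about this family: (i) the intersection graph of $\{B_u:u\in S\}$ contains $G_0[S]$ for every $S\subseteq V(G)$ — indeed, if $(u,v)\in E(G)$ then $\phi(v)\in B_u$, so $B_u\cap B_v\neq\emptyset$, and the balls do not change when we pass to a subset; and (ii) the family has \emph{ply} $\mu=O_{k,d}(1)$, i.e.\ no point of $\mathbb{R}^d$ lies in more than $\mu$ of the balls. I would also record that the underlying undirected graph $G_0$ has maximum degree $\Delta=O_{k,d}(1)$.

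Fact (ii) and the degree bound both come from the standard angular packing argument, which I expect to be the only genuinely delicate part of the proof, since the radii $\rho_u$ may range over arbitrarily many scales while the ply must stay bounded in terms of $k$ and $d$ alone. For the degree bound: in any realization the out-degree of every vertex is at most $k$ (if $v$ is the farthest out-neighbor of $u$, then all out-neighbors of $u$ lie within distance $\lVert\phi(u)-\phi(v)\rVert$ of $\phi(u)$, and by the definition of out-neighbor there are at most $k$ such points); and for the in-degree, if $\phi(v)$ is among the $k$ nearest neighbors of each of $\phi(w_1),\dots,\phi(w_t)$, cover $S^{d-1}$ by $c_d$ cones of half-angle $<\pi/6$ with apex $\phi(v)$ — whenever two of the $\phi(w_i)$ fall in the same cone, the one nearer to $\phi(v)$ is strictly closer to the farther one than the farther one is to $\phi(v)$, so each cone contains at most $k$ of the $\phi(w_i)$, giving $t\le c_dk$. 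Fact (ii) is the same computation with an arbitrary query point $q$ in place of $\phi(v)$: if $q\in B_u$ then at most $k$ points of $P\setminus\{\phi(u)\}$ are strictly closer to $\phi(u)$ than $q$ is (because $\rho_u$ is realized by an out-neighbor of $u$), and the cone argument based at $q$ bounds the number of such $u$ by $c_d(k+1)$.

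With (i), (ii) and the degree bound in hand, the rest is bookkeeping. Fix a subgraph $H$ of $G_0$, put $S=V(H)$ and $m=|S|\ge 2$. Applying the theorem of Miller et al.~\cite{miller:1997} to the $m$ balls $\{B_u:u\in S\}$ (ply $\le\mu$) produces a vertex separator $C$ of the intersection graph — hence of the subgraph $G_0[S]$ — with $|C|=O(\mu^{1/d}m^{1-1/d})=O_{k,d}(m^{1-1/d})$ such that every component of $G_0[S]-C$ has at most $\frac{d+1}{d+2}m$ vertices. Deleting all edges of $G_0[S]$ incident to $C$ removes at most $\Delta\lvert C\rvert=O_{k,d}(m^{1-1/d})$ edges and leaves components of size at most $\frac{d+1}{d+2}m$. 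Iterating this on each remaining component, after $t=O(d)$ rounds every component has at most $m/2$ vertices; at round $\ell$ the components partition at most $m$ vertices into blocks of size at most $(\frac{d+1}{d+2})^{\ell}m$, so by concavity of $x\mapsto x^{1-1/d}$ the number of edges deleted in round $\ell$ is $O_{k,d}((\frac{d+1}{d+2})^{-\ell/d}m^{1-1/d})$, and summing this geometric-type series over $\ell\le t$ gives a total of $O_{k,d}(m^{1-1/d})$ deleted edges — that is, a balanced cut $E^\star\subseteq E(G_0[S])$ of the stated size. Finally, since $E(H)\subseteq E(G_0[S])$, the graph $H-(E^\star\cap E(H))$ is a subgraph of $G_0[S]-E^\star$ on vertex set $S$, so all its components have at most $m/2$ vertices; thus $E^\star\cap E(H)$ is a balanced cut of $H$ of size $O_{k,d}(m^{1-1/d})$, as required.
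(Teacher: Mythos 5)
Your proof is correct and follows essentially the same route as the paper's: both reduce to the Miller--Teng--Thurston--Vavasis sphere-separator theorem, convert the resulting vertex separator into an edge cut via the $O_{k,d}(1)$ degree bound, and iterate on components until the cut is balanced. The only real difference is that you derive the ply and degree bounds from the cone-packing argument rather than citing them, which has the side benefit of making explicit why the separator theorem applies to arbitrary subgraphs (a sub-family of the balls still has bounded ply) --- a point the paper only asserts in a remark.
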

\begin{proof}
Let $k,d \in \mathbb{N}$ be fixed numbers.
The lemma follows from two results in~\cite{miller:1997}.
Specifically, it was shown in \cite{miller:1997} that if a graph $G$ is $k$NN-realizable in $\mathbb{R}^d$, then its corresponding undirected graph $G_0$ satisfies the following conditions.
\begin{enumerate}
    \item The degree of every vertex in $G_0$ is $O(1)$.
    \item For every subgraph $H$ of $G_0$, there exists $S \subseteq V(H)$ such that $|S| = O(|V(H)|^{1-\frac{1}{d}})$ and every connected component of $H - S$ contains at most $\frac{d+1}{d+2} \cdot |V(H)|$ vertices.
\end{enumerate}
We remark that \cite{miller:1997} only claimed condition~2 above for the case $H = G_0$, but the argument extends to any subgraph $H$ of $G_0$.
Using the above two conditions, it is fairly easy to construct the desired balanced cut for any subgraph of $G_0$.
In~\cite{miller:1997}, condition~1 is shown in Corollary 3.2.3 and condition~2 is shown in Theorem 3.2.2.

We can construct a balanced cut $E \subseteq E(H)$ of a subgraph $H$ of $G_0$ as follows.
We begin with an empty $E$ and repeatedly apply the following operation until $E$ is a balanced cut of $H$.
Pick the \textit{largest} connected component $C$ of $H-E$, i.e., the component consisting of the maximum number of vertices.
Note that any connected component of $H-E$ other than $C$ contains at most $|V(H)|/2$ vertices.
If $|V(C)| \leq |V(H)|/2$, then $E$ is already a balanced cut.
Otherwise, we apply condition~2 above to find a set $S \subseteq V(C)$, and then add to $E$ all edges of $H$ incident to $S$.
The out-degree of each vertex in $H$ is at most $k = O(1)$, and the in-degree is also $O(1)$.
Thus, the number of edges added to $E$ is $O(|V(H)|^{1-\frac{1}{d}})$, since $|S| = O(|V(H)|^{1-\frac{1}{d}})$.
In this way, we iteratively add edges to $E$ until it becomes a balanced cut of $H$.
Observe that this procedure terminates in $O(1)$ iterations.
Indeed, by an induction argument and condition~2 above, one can easily show that after the $i$-th iteration, the largest connected component of $H-E$ contains at most $\max\{(\frac{d+1}{d+2})^i,\frac{1}{2}\} |V(H)|$ vertices.
Thus, after $O(1)$ iterations, $E$ is a balanced cut of $H$ and we have $|E| = O(|V(H)|^{1-\frac{1}{d}})$.
\end{proof}

A \textit{minimum} balanced cut refers to a balanced cut consisting of the minimum number of edges.
A \textit{$c$-approximate} minimum balanced cut refers to a balanced cut whose size is at most $c \cdot \mathsf{opt}$, where $\mathsf{opt}$ is the size of a minimum balanced cut.
The above lemma implies that if $G$ is $k$NN-realizable in $\mathbb{R}^d$, then the size of a minimum balanced cut of any subgraph $H$ of $G_0$ is $O(|V(H)|^{1-\frac{1}{d}})$.
We need the following algorithm by Chuzhoy et al. for computing approximate minimum balanced cuts.
\begin{theorem}[\cite{chuzhoy2020deterministic}] \label{thm-cut}
For any fixed number $\alpha > 0$, there exists an algorithm that, given an undirected graph $H$ of $n$ vertices and $m$ edges, computes a $\log^{O(1/\alpha)} n$-approximate minimum balanced cut of $H$ in $O(m^{1+\alpha})$ time.
\end{theorem}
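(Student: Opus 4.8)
The plan is to prove Theorem~\ref{thm-cut} via the \emph{cut--matching game} of Khandekar, Rao and Vazirani, combined with a recursive derandomization whose depth is governed by the parameter $\alpha$. The first step is a standard reduction: to compute an approximate minimum balanced cut of $H$ it suffices to have a subroutine that, given any subgraph $H'$ with $n'$ vertices, either outputs a cut $(A,B)$ of $H'$ with $\min\{|A|,|B|\}=\Omega(n')$ and conductance $O(\psi)$, or certifies that $H'$ contains no balanced cut of conductance much smaller than $\psi$. Running this subroutine for geometrically decreasing values of $\psi$ and peeling off the unbalanced sparse cuts it returns yields a balanced cut whose size is within a $\mathrm{polylog}(n)$ factor of optimal; so the crux is this ``most-balanced sparse cut'' primitive, together with a careful accounting of the loss incurred by the peeling.

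Second, I would implement the primitive with the cut--matching game itself. One maintains an auxiliary graph $W$ on $V(H')$ that is a union of $O(\log^{2} n')$ perfect matchings; in each round a \emph{cut player} produces a bisection $(S,\bar{S})$ of $V(H')$, and a \emph{matching player} either routes, via (approximate) max-flow in $H'$, a large fractional matching between $S$ and $\bar{S}$ with low congestion -- in which case that matching is folded into $W$ -- or returns a sparse, reasonably balanced cut of $H'$, which we output. After $O(\log^{2} n')$ rounds $W$ provably becomes an expander, which certifies that $H'$ itself has no balanced sparse cut. The matching player is just blocking/approximate max-flow and is already deterministic; the delicate component is the cut player.

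Third -- and this is where the dependence on $\alpha$ enters -- I would derandomize the cut player. In the randomized version the cut player uses random projections of the flow embedding to produce a progress-making bisection; to make this deterministic, observe that producing such a bisection is essentially the same task as finding a balanced low-conductance cut, but now in the much more structured union-of-matchings graph $W$ (suitably contracted/weighted). This lets one invoke the balanced-cut algorithm \emph{recursively} on an instance of the same type whose size is a $\mathrm{polylog}(n)$ fraction of the original and whose ``recursion budget'' is decreased by one. Terminating the recursion after $r=\Theta(1/\alpha)$ levels with a trivial cut routine on an instance of size $n^{o(1)}$, the running time obeys a recurrence of the form $T(m)\le \mathrm{polylog}(n)\cdot\bigl(\hat{m}^{1+o(1)}+T(\hat{m})\bigr)$ with $\hat{m}\le m/\mathrm{polylog}(n)$, which telescopes to $O(m^{1+\alpha})$, while each recursion level multiplies the approximation ratio by a fixed $\mathrm{polylog}(n)$ factor, giving the claimed $\log^{O(1/\alpha)}n$ guarantee.

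The main obstacle I expect is exactly the interplay in the third step: one must show that the recursive call is genuinely made on a smaller instance (so that the recursion bottoms out and the running time telescopes to $m^{1+\alpha}$) while remaining faithful enough that a good balanced cut of that instance translates back into a bisection for which the cut--matching potential drops by the amount the analysis requires. Getting the bookkeeping right -- tracking vertex and edge counts through the contractions, the congestion accumulated by the routed matchings, and the compounded loss in conductance across all $\Theta(1/\alpha)$ levels and the subsequent peeling -- is the technically demanding part; by contrast, the max-flow routing and the expander-potential argument inside a single invocation of the cut--matching game are by now fairly routine.
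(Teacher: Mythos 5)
The paper does not prove this statement at all: Theorem~\ref{thm-cut} is imported as a black box from Chuzhoy et al.~\cite{chuzhoy2020deterministic}, so there is no in-paper argument to compare yours against; the citation \emph{is} the proof as far as this paper is concerned. That said, your outline does faithfully capture the architecture of the proof in that reference: the reduction of balanced cut to a most-balanced sparse-cut primitive with iterative peeling, the Khandekar--Rao--Vazirani cut--matching game with a deterministic flow-based matching player, and a recursive derandomization of the cut player whose depth $\Theta(1/\alpha)$ is precisely what yields both the $O(m^{1+\alpha})$ running time and the $\log^{O(1/\alpha)} n$ approximation factor. So your route is the ``right'' one for the cited result.

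However, what you have written is a roadmap rather than a proof. Every technically substantive step is named but not carried out: the accounting for the peeling loop (why peeling off unbalanced sparse cuts terminates with a balanced cut losing only polylog factors), the potential argument showing $W$ becomes an expander after $O(\log^2 n')$ rounds, the claim that a balanced low-conductance cut of the contracted matching graph translates back into a bisection on which the potential drops sufficiently, and the verification that the recursive instance genuinely shrinks so the recurrence telescopes. You explicitly flag these as the hard parts, which is honest, but it means the proposal cannot stand alone as a proof of the theorem. For the purposes of this paper none of that work is required --- the result is used purely as a subroutine, and reproducing its proof would be out of scope.
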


The following lemma is the main result of this section, which states that one can remove a small fraction of edges from $G$ to split $G$ into small components.

\begin{lemma} \label{lem-edgeremoval}
Let $k, d \in \mathbb{N}$ be fixed numbers.
Given a $k$-regular directed graph $G$ and a number $\varepsilon \in (0,1]$, one can either compute a subset $E \subseteq E(G)$ such that $|E| \leq \varepsilon |E(G)|$ and each weakly-connected component of $G - E$ contains at most $(\frac{1}{\varepsilon})^{O(1)}$ vertices, or conclude that $G$ is not $k$NN-realizable in $\mathbb{R}^d$, in $O(|V(G)|^{1+\alpha})$ time for any constant $\alpha > 0$.
Here the constants hidden in $O(\cdot)$ depend on $k$, $d$, and $\alpha$.
\end{lemma}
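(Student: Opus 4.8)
The plan is to build $E$ by recursively splitting $G_0$, the undirected graph corresponding to $G$, with the approximate balanced-cut algorithm of Theorem~\ref{thm-cut}, stopping once every piece has at most $\tau$ vertices for a threshold $\tau=(1/\varepsilon)^{O(1)}$ fixed below. Let $\rho=O(1/\alpha)$ be the exponent for which Theorem~\ref{thm-cut}, run with parameter $\alpha/2$, returns on a piece $H$ a balanced cut of size at most $(\log|V(H)|)^{\rho}$ times the minimum. Maintain a collection $\mathcal{C}$ of vertex-disjoint connected induced subgraphs of $G_0$ covering $V(G)$, initialized to the connected components of $G_0$, together with an undirected edge set $\widetilde E\subseteq E(G_0)$, initially empty. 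While some $H\in\mathcal{C}$ has more than $\tau$ vertices: run Theorem~\ref{thm-cut} on $H$ to get a balanced cut $F$; if $|F|>c_1\,(\log|V(H)|)^{\rho}\,|V(H)|^{1-1/d}$, where the constant $c_1=c_1(k,d)$ is taken large enough that, by Lemma~\ref{lem-smallcut} together with the approximation guarantee of Theorem~\ref{thm-cut}, a $k$NN-realizable $G$ forces the returned cut on every such $H$ to be at most this size, then conclude ``$G$ is not $k$NN-realizable in $\mathbb{R}^d$'' and halt; otherwise add $F$ to $\widetilde E$ and replace $H$ in $\mathcal{C}$ by the connected components of $H-F$, each of size at most $|V(H)|/2$. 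Finally output $E:=\{(u,v)\in E(G):\{u,v\}\in\widetilde E\}$.

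Soundness of the early termination is immediate from Lemma~\ref{lem-smallcut}: were $G$ realizable, every subgraph $H$ of $G_0$ with $|V(H)|\ge 2$ would have a balanced cut of size $\le c_1|V(H)|^{1-1/d}$, so Theorem~\ref{thm-cut} would return one within the tested bound; hence if $G$ is realizable the algorithm never halts early and does output some $E$. The process terminates because a balanced cut of a connected piece on $\ge 2$ vertices is nonempty, so each non-rejecting iteration strictly increases $|\mathcal{C}|\le|V(G)|$. At the end every piece of $\mathcal{C}$ has $\le\tau$ vertices, and since $E$ is exactly the set of directed edges of $G$ underlying $\widetilde E$, removing $E$ from $G$ splits it into the pieces of $\mathcal{C}$; so the weakly-connected components of $G-E$ have at most $\tau=(1/\varepsilon)^{O(1)}$ vertices, as required.

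For the edge bound, charge each cut $F$ taken at a piece $H$ uniformly to $V(H)$, so $v\in V(H)$ receives $|F|/|V(H)|\le c_1(\log|V(H)|)^{\rho}|V(H)|^{-1/d}$ (the test guarantees this for every $F$ we add, realizable or not). A fixed $v$ lies in a chain $H_0\supset H_1\supset\cdots\supset H_t$ of the cut pieces containing it, with $|V(H_{i+1})|\le|V(H_i)|/2$ and $|V(H_t)|>\tau$; putting $s_i:=|V(H_i)|$ gives $s_i>2^{t-i}\tau$, and since $g(s):=(\log s)^{\rho}s^{-1/d}$ is decreasing for $s$ above a constant depending on $\rho,d$, the total charge to $v$ is at most $c_1\sum_{i\le t}g(s_i)\le c_1\sum_{j\ge 0}g(2^j\tau)\le c_1\,C(\rho,d)\,(\log\tau)^{\rho}\tau^{-1/d}$ for the convergent constant $C(\rho,d)=\sum_{j\ge 0}(1+j)^{\rho}2^{-j/d}$. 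As the cuts are edge-disjoint, $|\widetilde E|=\sum_v(\text{charge to }v)\le c_1C(\rho,d)(\log\tau)^{\rho}\tau^{-1/d}\,|V(G)|$, while $|E(G)|=k|V(G)|\ge|V(G)|$ since $G$ is $k$-regular. Because $\tau^{1/d}/(\log\tau)^{\rho}\to\infty$, some $\tau=(1/\varepsilon)^{O(1)}$ — with constants depending only on $k,d,\alpha$ — makes $2c_1C(\rho,d)(\log\tau)^{\rho}\tau^{-1/d}\le\varepsilon$, and then $|E|\le 2|\widetilde E|\le\varepsilon|E(G)|$.

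For the running time, the pieces occurring at a fixed recursion depth are vertex-disjoint subgraphs of $G_0$, so their edge counts sum to at most $|E(G_0)|=O(|V(G)|)$; by superadditivity of $x\mapsto x^{1+\alpha/2}$ all Theorem~\ref{thm-cut} calls at one depth cost $O(|V(G)|^{1+\alpha/2})$ in total, and since sizes halve the recursion has $O(\log|V(G)|)$ depths, for $O(|V(G)|^{1+\alpha/2}\log|V(G)|)=O(|V(G)|^{1+\alpha})$ overall (bookkeeping, i.e.\ maintaining $\mathcal{C}$ and extracting connected components, costs only $O(|V(G)|\log|V(G)|)$). The one delicate point — and the reason the piece bound is $(1/\varepsilon)^{O(1)}$ rather than carrying a factor of $\mathrm{poly}\log|V(G)|$ — is that the approximation ratio of Theorem~\ref{thm-cut} at a piece $H$ is $\log^{O(1/\alpha)}(|V(H)|)$, in the size of that piece and not of the input graph; this turns the per-vertex charge into a geometric-type series dominated by its bottom term $g(\tau)$, leaving only a surviving factor $(\log\tau)^{O(1/\alpha)}$ that is polynomial in $\log(1/\varepsilon)$ and is absorbed in the choice of $\tau$. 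Everything else, in particular soundness of the ``not realizable'' answer, is a direct application of Lemma~\ref{lem-smallcut}.
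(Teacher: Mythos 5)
Your proposal is correct and follows essentially the same route as the paper: recursively apply the approximate balanced-cut algorithm of Theorem~\ref{thm-cut} until every piece has at most $(1/\varepsilon)^{O(1)}$ vertices, reject as non-realizable if any returned cut exceeds the bound guaranteed by Lemma~\ref{lem-smallcut} times the approximation factor, and bound the running time by level-sums over the $O(\log n)$-depth recursion. The only substantive difference is the accounting for $|E|\le\varepsilon|E(G)|$: the paper proves by induction that the output on an $n$-vertex piece has size at most $\varepsilon(n-n^{1-\frac{1}{3d}})$, using the convexity inequality $\sum_i n_i^{1-\frac{1}{3d}}\ge 2^{\frac{1}{3d}}n^{1-\frac{1}{3d}}$, whereas you charge each cut uniformly to the vertices of its piece and sum a convergent geometric-type series along each vertex's chain of ancestors; both arguments are valid and exploit the same facts (pieces halve in size, cut pieces have more than $\tau$ vertices, and the per-piece cut size is sublinear).
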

\begin{proof}
Let $G_0$ be the corresponding undirected graph of $G$, and $\alpha > 0$ be a constant.
It suffices to compute $E \subseteq E(G_0)$ such that $|E| \leq \varepsilon |E(G_0)|$ and each connected component of $G_0 - E$ contains at most $(\frac{1}{\varepsilon})^{O(1)}$ vertices.
Indeed, the edges in $G$ corresponding to $E$ form a subset of $E(G)$ of size $O(\varepsilon |E(G)|)$ which satisfies the desired property.
This is fine since our algorithm works for an arbitrary $\varepsilon > 0$.

Pick a constant $\alpha' \in (0,\alpha)$ and let $\textsc{ApprxMinCut}(H)$ denote the algorithm in Theorem~\ref{thm-cut}, which returns an approximate minimum balanced cut of $H$ in $O(|V(H)|^{1+\alpha'})$ time.
Our algorithm, presented in Algorithm~\ref{alg-cutedge}, computes $E \subseteq E(G_0)$ by recursively applying $\textsc{ApprxMinCut}$ as a sub-routine.
We fix some threshold $\delta = (\frac{1}{\varepsilon})^c$ for a sufficiently large $c$ (depending on $k$, $d$, and $\alpha$).
If $|V(G_0)| \leq \delta$, then we simply return $E = \emptyset$.
Otherwise, we apply $\textsc{ApprxMinCut}(G_0)$ to obtain an approximate minimum balanced cut of $G_0$, and add the edges in the cut to $E$.
Then for every connected component $C$ of $G_0 - E$, we recursively apply our algorithm on $C$, which returns a subset of $E_C \subseteq E(C)$, and we add the edges in $E_C$ to $E$.
Finally, we return $E$ as the output of our algorithm.

\begin{algorithm}[htbp]
    \caption{\textsc{CutEdge}$(G_0)$}
    \begin{algorithmic}[1]
        \If{$|V(G_0)| \leq \delta$}{ \textbf{return} $\emptyset$}
            \begin{nolinenumbers}
            \end{nolinenumbers}
        \EndIf
        \State $E \leftarrow \textsc{ApprxMinCut}(G_0)$
        \State $\mathcal{C} \leftarrow$ set of connected components of $G_0 - E$
        \For{every $C \in \mathcal{C}$}
            \State $E \leftarrow E \cup \textsc{CutEdge}(C)$
            \begin{nolinenumbers}
            \end{nolinenumbers}
        \EndFor
        \State \textbf{return} $E$
    \end{algorithmic}
    \label{alg-cutedge}
\end{algorithm}

Clearly, every connected component of $G_0 - E$ contains at most $\delta = (\frac{1}{\varepsilon})^{O(1)}$ vertices.
So it suffices to show $|E| \leq \varepsilon |E(G_0)|$ and analyze the running time of the algorithm.
To bound $|E|$, we observe that when applying $\textsc{ApprxMinCut}$ on any subgraph $H$ of $G_0$, the size of the edge set obtained is of size $|V(H)|^{1-\frac{1}{d}} \log^{O(1/\alpha')} |V(H)|$.
Indeed, a minimum balanced cut of $H$ has size $O(|V(H)|^{1-\frac{1}{d}})$ by Lemma~\ref{lem-smallcut}, and $\textsc{ApprxMinCut}(H)$ returns a $\log^{O(1/\alpha')} |V(H)|$-approximate minimum balanced cut of $H$.
Since we defined $\delta = (\frac{1}{\varepsilon})^c$ for a sufficiently large $c$, we may assume that the size of $\textsc{ApprxMinCut}(H)$ is at most $|V(H)|^{1-\frac{1}{2d}}$ for any subgraph $H$ of $G_0$ such that $|V(H)| > \delta$.
Also, we may assume that $\varepsilon (2^{\frac{1}{3d}}-1) n^{1-\frac{1}{3d}} > n^{1-\frac{1}{2d}}$ for all $n > \delta$.
We shall prove that when applying our algorithm on a subgraph $H$ of $G_0$ with $|V(H)|=n$, the size of $E \subseteq E(H)$ returned is at most $\varepsilon (n - n^{1-\frac{1}{3d}})$.
We apply induction on $n$.
If $V(H) \leq \delta$, then our algorithm returns an empty set and thus the statement trivially holds.
Assume the statement holds for $|V(H)| \in [n-1]$, and we consider the case $|V(H)| = n$ (where $n > \delta$).
Our algorithm applies $\textsc{ApprxMinCut}(H)$ to obtain an approximate minimum balanced cut $E_0 \subseteq E(H)$ of $H$.
As aforementioned, $|E_0| \leq n^{1-\frac{1}{2d}}$.
Let $C_1,\dots,C_r$ be the connected components of $H - E_0$.
Set $n_i = |V(C_i)|$ for $i \in [r]$.
We have $n_i \leq n/2$ for all $i \in [r]$, by the definition of a balanced cut.
We recursively call our algorithm on each $C_i$ to obtain a subset $E_i \subseteq E(C_i)$.
By our induction hypothesis, $|E_i| \leq \varepsilon(n_i - n_i^{1-\frac{1}{3d}})$.
Finally, the algorithm returns $E = \bigcup_{i=0}^r E_i$.
Thus, we have $|E| = |E_0| + \sum_{i=1}^r |E_i|$.
As $\sum_{i=1}^r n_i = n$, $\sum_{i=1}^r |E_i| \leq \varepsilon n - \varepsilon \sum_{i=1}^r n_i^{1-\frac{1}{3d}}$.
Furthermore, since $n_i \leq n/2$ for all $i \in [r]$ and $\sum_{i=1}^r n_i = n$, it holds $\sum_{i=1}^r n_i^{1-\frac{1}{3d}} \geq 2(\frac{n}{2})^{1-\frac{1}{3d}} = 2^{\frac{1}{3d}} n^{1-\frac{1}{3d}}$.
Combining the bounds for $|E_0|$ and $\sum_{i=1}^r |E_i|$, we have
\begin{equation*}
    |E| \leq n^{1-\frac{1}{2d}} + \varepsilon n - \varepsilon \cdot 2^{\frac{1}{3d}} n^{1-\frac{1}{3d}}.
\end{equation*}
Recall our assumption $\varepsilon (2^{\frac{1}{3d}}-1) n^{1-\frac{1}{3d}} \geq n^{1-\frac{1}{2d}}$.
Together with the inequality above, this gives us $|E| \leq \varepsilon (n - n^{1-\frac{1}{3d}})$, so our induction works.
In particular, when we apply Algorithm~\ref{alg-cutedge} on $G_0$, the output $E \subseteq E(G_0)$ satisfies $|E| \leq \varepsilon (|V(G_0)| - |V(G_0)|^{1-\frac{1}{3d}}) \leq \varepsilon |V(G_0)|$.

To analyze the time complexity of Algorithm~\ref{alg-cutedge}, we consider the recursion tree $T$ of our algorithm when applying on $G_0$.
Each node $t \in T$ corresponds to a recursive call, and denote by $G_0(t)$ the corresponding graph handled in that call (which is a subgraph of $G_0$).
The time cost at a node $t \in T$ is $O(|E(G_0(t))|^{1+\alpha'})$, which is just $O(|V(G_0(t))|^{1+\alpha'})$ because $|E(G_0(t))| \leq k|V(G_0(t))|$.
Since $\textsc{ApprxMinCut}$ always returns a balanced cut, we know that $|V(G_0(t))| \leq |V(G_0(t'))|/2$ if $t'$ is the parent of $t$ in $T$.
This implies that the depth of $T$ is $O(\log |V(G_0)|)$.
Also, the sum of $|V(G_0(t))|$ for all nodes $t \in T$ at one level of $T$ is at most $|V(G_0)|$.
Thus $\sum_{t \in T} |V(G_0(t))| \leq |V(G_0)| \log |V(G_0)|$ and $\sum_{t \in T} |V(G_0(t))|^{1+\alpha'} \leq |V(G_0)|^{1+\alpha'} \log |V(G_0)|$.
The latter implies that the time complexity of Algorithm~\ref{alg-cutedge} is $O(|V(G_0)|^{1+\alpha'} \log |V(G_0)|)$, which is $O(|V(G_0)|^{1+\alpha})$ since $\alpha > \alpha'$.
We remark that a more careful analysis can be applied to show that the running time of our algorithm is actually $O(|V(G_0)|^{1+\alpha'})$ instead of $O(|V(G_0)|^{1+\alpha'} \log |V(G_0)|)$.
But for simplicity, we chose this looser analysis, which is already sufficient for our purpose.
\end{proof}

\subsection{Computing the approximate realization} \label{sec-component}

After computing the set $E \subseteq E(G)$ of edges using Lemma~\ref{lem-edgeremoval}, we consider the graph $G - E$.
Let $C_1,\dots,C_r$ be the weakly-connected components of $G - E$.
We have $|V(C_i)| = (\frac{1}{\varepsilon})^{O(1)}$ by Lemma~\ref{lem-edgeremoval}.
For each $i \in [r]$, we want to compute a ``$k$NN-realization'' of $C_i$ in $\mathbb{R}^d$.
Of course, here each $C_i$ is not necessarily $k$-regular, and thus a $k$NN-realization of $C_i$ is not defined.
But we can slightly generalize the definition of $k$NN-realization as follows.
For a directed graph $C$, we say a map $\phi: V(C) \rightarrow \mathbb{R}^d$ is a \textit{quasi-$k$NN-realization} of $C$ in $\mathbb{R}^d$ if for every edge $(u,v) \in E(C)$, $|\{v' \in V(C) \backslash \{u\}: \lVert \phi(u) - \phi(v') \rVert_2 \leq \lVert \phi(u) - \phi(v) \rVert_2\}| \leq k$.
By this definition, a $k$NN-realization is also a quasi-$k$NN-realization.
Also, it is clear that if $\phi:V(H) \rightarrow \mathbb{R}^d$ is a quasi-$k$NN-realization of a directed graph $H$ in $\mathbb{R}^d$, then for any subgraph $C$ of $H$, the map $\phi_{|V(C)}$ is a quasi-$k$NN-realization of $C$ in $\mathbb{R}^d$.
Therefore, if $G$ is $k$NN-realizable in $\mathbb{R}^d$, then each of $C_1,\dots,C_r$ admits a quasi-$k$NN-realization in $\mathbb{R}^d$.

Next, we discuss how to compute a quasi-$k$NN-realization of each $C_i$ in $\mathbb{R}^d$ (or conclude it does not exist).
To this end, we need the following lemma.

\begin{lemma}
Let $C$ be a directed graph where $|V(C)| \geq k+1$.
If $C$ admits a quasi-$k$NN-realization in $\mathbb{R}^d$, then there exists a $k$-regular supergraph $C'$ of $C$ with $V(C') = V(C)$ such that $C'$ is $k$NN-realizable in $\mathbb{R}^d$.
\end{lemma}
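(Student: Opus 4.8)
The plan is to start from a quasi-$k$NN-realization $\phi: V(C) \to \mathbb{R}^d$ of $C$ and simply read off the actual $k$NN-graph of the point set $P = \phi(V(C))$. Define $C'$ to be the directed graph on vertex set $V(C)$ where $(u,v) \in E(C')$ iff $\phi(v)$ is among the $k$ nearest neighbors of $\phi(u)$ in $P$; to make this well-defined even under ties, fix any tie-breaking rule (e.g., a total order on $V(C)$) so that each vertex has exactly $k$ out-neighbors — this is where $|V(C)| \geq k+1$ is needed, so that $k$ out-neighbors actually exist. First I would check that $C'$ is $k$-regular and $k$NN-realizable in $\mathbb{R}^d$: both are essentially immediate, since $C'$ is literally defined to be (isomorphic to) the $k$NN-graph of $P$ under the chosen tie-breaking, and $\phi$ itself is the realization.

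The substantive step is showing $C'$ is a \emph{supergraph} of $C$, i.e., $E(C) \subseteq E(C')$. Take $(u,v) \in E(C)$. The quasi-$k$NN condition says $|\{v' \in V(C) \setminus \{u\}: \lVert \phi(u)-\phi(v')\rVert_2 \leq \lVert \phi(u)-\phi(v)\rVert_2\}| \leq k$. This set contains $v$ itself, so at most $k-1$ other points are weakly closer to $\phi(u)$ than $\phi(v)$ is; hence $\phi(v)$ is certainly among the $k$ nearest neighbors of $\phi(u)$ — in fact strictly so, as fewer than $k$ points can be closer. The only subtlety is tie-breaking: if several points are at exactly distance $\lVert\phi(u)-\phi(v)\rVert_2$ from $\phi(u)$, I must ensure the tie-break still places $\phi(v)$ in the chosen $k$-set. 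But the quasi-$k$NN bound already caps the \emph{total} number of points at distance $\leq \lVert\phi(u)-\phi(v)\rVert_2$ (including $v$) by $k$, so \emph{every} such point, $v$ included, must be selected regardless of how ties among them are resolved. Thus $(u,v) \in E(C')$, and since $V(C') = V(C)$ by construction, $C'$ is the desired supergraph.

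I do not anticipate a real obstacle here; the lemma is a bookkeeping step that upgrades a "partial" realization to a genuine one by filling in the missing edges. The one thing to be careful about is the definition of $k$NN-realization in the presence of co-located or equidistant points: the paper's Definition in the Basic definitions section uses strict inequalities $\mathsf{dist}(\phi(v),\phi(u)) < \mathsf{dist}(\phi(v),\phi(u'))$ for $(v,u) \in E$, $(v,u') \notin E$, so I should verify that the $C'$ I construct satisfies this strict separation. It does: for $(u,v) \in E(C')$ and $(u,w) \notin E(C')$, the vertex $w$ was \emph{not} selected among the $k$ nearest, which — given the quasi-$k$NN cap forces selection of everything at distance $\leq$ that of the $k$-th selected point — means $w$ is strictly farther from $\phi(u)$ than every out-neighbor of $u$, in particular than $v$. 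So $\phi$ is a legitimate $k$NN-realization of $C'$, completing the proof.
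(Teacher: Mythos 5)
Your construction of $C'$ as the $k$NN-graph of the point set $\phi(V(C))$ is the same as the paper's, and your argument that $E(C)\subseteq E(C')$ is correct: for $(u,v)\in E(C)$ the quasi-$k$NN bound caps the number of points at distance $\le\lVert\phi(u)-\phi(v)\rVert_2$ from $\phi(u)$ by $k$, so all of them, $v$ included, are forced into the selected $k$-set under any tie-break. However, your final step --- verifying that $\phi$ itself is a \emph{strict} $k$NN-realization of $C'$ --- contains a genuine error. You assert that ``the quasi-$k$NN cap forces selection of everything at distance $\le$ that of the $k$-th selected point,'' but the quasi-$k$NN condition only constrains distances up to $\lVert\phi(u)-\phi(v)\rVert_2$ for edges $(u,v)$ that are actually present in $E(C)$; it says nothing about distances beyond $u$'s farthest out-neighbor in $C$, and $C$ need not be $k$-regular (indeed some vertices of a component $C_i$ may have out-degree far less than $k$, even $0$). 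Concretely, take $k=2$, let $u$ have the single out-edge $(u,v)$ in $C$ with $\lVert\phi(u)-\phi(v)\rVert_2=1$, and let three further vertices $w_1,w_2,w_3$ all lie at distance $2$ from $\phi(u)$. The quasi-$k$NN condition holds, but the second nearest neighbor of $\phi(u)$ is determined only by tie-breaking; whichever $w_i$ you select, the unselected $w_j$ sits at exactly the same distance, so the strict inequality $\lVert\phi(u)-\phi(w_i)\rVert_2<\lVert\phi(u)-\phi(w_j)\rVert_2$ required by the definition of $k$NN-realization fails, and $\phi$ is not a $k$NN-realization of your $C'$.

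The paper avoids this by first perturbing $\phi$ into general position so that no two distances from a common point coincide, and only then reading off the $k$NN-graph; a sufficiently small perturbation preserves the quasi-$k$NN property (every strict inequality is preserved, so the set $\{v'\ne u:\lVert\phi(u)-\phi(v')\rVert_2\le\lVert\phi(u)-\phi(v)\rVert_2\}$ can only shrink), and after the perturbation there are no ties, so the strict inequalities in the definition hold automatically. Your proof is repaired by inserting exactly this perturbation step before defining $C'$; as written, the tie-breaking device does not suffice.
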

\begin{proof}
Assume $\phi: V(C) \rightarrow \mathbb{R}^d$ is a quasi-$k$NN-realization of $C$ in $\mathbb{R}^d$.
We can slightly perturb $\phi$ so that for any distinct $u,v,v' \in V(C)$, $\lVert \phi(u) - \phi(v') \rVert_2 \neq \lVert \phi(u) - \phi(v) \rVert_2\}$.
Now define $C'$ as a directed graph with $V(C') = V(C)$ and $E(C') = \{(u,v) \in V(C') \times V(C'): u \neq v \text{ and } \mathsf{rank}_\phi(u,v) \leq k\}$, where 
\begin{equation*}
    \mathsf{rank}_\phi(u,v) = |\{v' \in V(C') \backslash \{u\}: \lVert \phi(u) - \phi(v') \rVert_2 \leq \lVert \phi(u) - \phi(v) \rVert_2\}|.
\end{equation*}
The construction guarantees that $C'$ is $k$-regular and $\phi$ is a $k$NN-realization of $C'$ in $\mathbb{R}^d$.
Since $\phi$ is a quasi-$k$NN-realization of $C$, $E(C) \subseteq E(C')$ and thus $C'$ is a supergraph of $C$.
\end{proof}

If $|V(C_i)| \leq k$, then any map from $V(C_i)$ to $\mathbb{R}^d$ is a quasi-$k$NN-realization of $C_i$.
Otherwise, by the above lemma, to compute a quasi-$k$NN-realization of $C_i$ in $\mathbb{R}^d$, it suffices to consider every supergraph $C_i'$ of $C_i$ with $V(C_i') = V(C_i)$ and try to compute a $k$NN-realization of $C_i'$ (or conclude that $C_i'$ is not $k$NN-realizable).
Note that the number of such supergraphs is at most $\exp({(\frac{1}{\varepsilon})^{O(1)}})$ because $|V(C_i)| = (\frac{1}{\varepsilon})^{O(1)}$.
If we obtain a $k$NN-realization of some $C_i'$, then it is a quasi-$k$NN-realization of $C_i$.
To compute a $k$NN-realization $\phi:V(C_i') \rightarrow \mathbb{R}^d$ of $C_i'$, we formulate the problem as finding a solution to a system of $(\frac{1}{\varepsilon})^{O(1)}$ degree-2 polynomial inequalities on $(\frac{1}{\varepsilon})^{O(1)}$ variables.
For each $v \in V(C_i')$, we represent the coordinates of $\phi(v)$ by $d$ variables $x_1(v),\dots,x_d(v)$.
Then for all distinct $u,v,v' \in V(C_i')$ such that $(u,v) \in E(C_i')$ and $(u,v') \notin E(C_i')$, we introduce a degree-2 polynomial inequality $\sum_{j=1}^d (x_j(u)-x_j(v))^2 \leq \sum_{j=1}^d (x_j(u)-x_j(v'))^2$, which expresses $\lVert \phi(u) - \phi(v) \rVert_2 < \lVert \phi(u) - \phi(v') \rVert_2$.
Clearly, the solutions to this system of inequalities one-to-one correspond to the $k$NN-realizations of $C_i'$ in $\mathbb{R}^d$.
Renegar~\cite{renegar:1992} showed that a system of $p$ degree-2 polynomial inequalities on $q$ variables can be solved in $p^{O(q)}$ time.
Therefore, we can compute in $(\frac{1}{\varepsilon})^{(\frac{1}{\varepsilon})^{O(1)}}$ time a $k$NN-realization of $C_i'$ in $\mathbb{R}^d$ (or decide its non-existence).
This further implies that we can compute in $(\frac{1}{\varepsilon})^{(\frac{1}{\varepsilon})^{O(1)}}$ time a quasi-$k$NN-realization of $C_i$ in $\mathbb{R}^d$ (or decide its non-existence).
The total time cost for all $C_i$ is then $(\frac{1}{\varepsilon})^{(\frac{1}{\varepsilon})^{O(1)}} \cdot n$, since $r \leq n$.

If $C_i$ does not admit a quasi-$k$NN-realization in $\mathbb{R}^d$ for some $i \in [r]$, then we can directly conclude that $G$ is not $k$NN-realizable in $\mathbb{R}^d$.
Otherwise we construct a $(1-\varepsilon)$-approximate $k$NN-realization of $G$ in $\mathbb{R}^d$ as follows.
Let $\phi_i:V(C_i) \rightarrow \mathbb{R}^d$ be the quasi-$k$NN-realization of $C_i$ in $\mathbb{R}^d$ we compute, for $i \in [r]$.
Define a map $\phi:V(G) \rightarrow \mathbb{R}^d$ as follows.
Pick a vector $\vec{x} \in \mathbb{R}^d$ such that $\lVert \vec{x} \rVert_2 >> \max_{i \in [r]} \max_{u,v \in V(C_i)} \lVert \phi_i(u) - \phi_i(v) \rVert_2$.
Then for each $i \in [r]$ and each $v \in V(C_i)$, set $\phi(v) = i \vec{x} + \phi_i(v)$.
The choice of $\vec{x}$ guarantees that for a vertex $v \in V(C_i)$, the $\phi$-images of the vertices in $C_i$ are closer to $\phi(v)$ than the $\phi$-images of the vertices outside $C_i$.
Also, for any $u,v \in V(C_i)$, $\lVert \phi(u) - \phi(v) \rVert_2 = \lVert \phi_i(u) - \phi_i(v) \rVert_2$.
Therefore, for any $u,v \in V(C_i)$, we have
\begin{align*}
    & \ \{v' \in V(G) \backslash \{u\}: \lVert \phi(u) - \phi(v') \rVert_2 \leq \lVert \phi(u) - \phi(v) \rVert_2\} \\
    = & \ \{v' \in V(C_i) \backslash \{u\}: \lVert \phi(u) - \phi(v') \rVert_2 \leq \lVert \phi(u) - \phi(v) \rVert_2\} \\
    = & \ \{v' \in V(C_i) \backslash \{u\}: \lVert \phi_i(u) - \phi_i(v') \rVert_2 \leq \lVert \phi_i(u) - \phi_i(v) \rVert_2\}.
\end{align*}
Recall the function $\sigma_\phi:V(G) \times V(G) \rightarrow \{0,1\}$ in Definition~\ref{def-apprxreal}.
The above equality shows that for $u,v \in V(C_i)$, if $|\{v' \in V(C_i) \backslash \{u\}: \lVert \phi_i(u) - \phi_i(v') \rVert_2 \leq \lVert \phi_i(u) - \phi_i(v) \rVert_2\}| \leq k$, then $\sigma_\phi(u,v) = 1$.
It follows that $\sigma_\phi(u,v) = 1$ for any $(u,v) \in E(C_i)$ because $\phi_i$ is a quasi-$k$NN-realization of $C_i$,
so we have
\begin{equation*}
    \sum_{(u,v) \in E(G)} \sigma_\phi(u,v) \geq \sum_{i=1}^r |E(C_i)| = |E(G)| - |E| \geq (1-\varepsilon) \cdot |E(G)|.
\end{equation*}
Therefore, $\phi$ is a $(1-\varepsilon)$-approximate $k$NN-realization of $G$ in $\mathbb{R}^d$.
Combining the time costs for computing $E$ and the maps $\phi_1,\dots,\phi_r$, the overall time complexity of our algorithm is $O(n^{1+\alpha})$, where $O(\cdot)$ hides a constant depending on $k$, $d$, $\alpha$, and $\varepsilon$.
\begin{theorem}
Let $\alpha > 0$ be any fixed number.
Given a $k$-regular directed graph $G$ of $n$ vertices and a number $\varepsilon > 0$, one can compute in $f(k,d,\varepsilon) \cdot n^{1+\alpha}$ time a $(1-\varepsilon)$-approximate $k$NN-realization of $G$ in $\mathbb{R}^d$, or conclude that $G$ is not $k$NN-realizable in $\mathbb{R}^d$, where $f$ is some computable function depending on $\alpha$.
\end{theorem}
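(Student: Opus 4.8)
The plan is to combine the edge-removal lemma with a brute-force realization of the resulting constant-size components. First I would invoke Lemma~\ref{lem-edgeremoval} with the given $\varepsilon$ and a constant $\alpha' \in (0,\alpha)$: in $O(n^{1+\alpha'})$ time this either certifies that $G$ is not $k$NN-realizable in $\mathbb{R}^d$ (in which case we are done), or returns a set $E \subseteq E(G)$ with $|E| \le \varepsilon |E(G)|$ such that every weakly-connected component $C_1,\dots,C_r$ of $G - E$ has at most $(1/\varepsilon)^{O(1)}$ vertices. Since a $k$NN-realization of $G$, restricted to any subset of vertices, still satisfies the quasi-$k$NN inequality on the surviving edges, $k$NN-realizability of $G$ implies that each $C_i$ admits a quasi-$k$NN-realization in $\mathbb{R}^d$; contrapositively, if some $C_i$ has no quasi-$k$NN-realization we may again conclude that $G$ is not realizable.

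Second I would realize each component. For a component with at most $k$ vertices any injection into $\mathbb{R}^d$ works. For a larger component $C_i$, I would use the preceding lemma: a quasi-$k$NN-realization of $C_i$ exists iff some $k$-regular supergraph $C_i'$ on the same vertex set is $k$NN-realizable. So I enumerate all $\exp((1/\varepsilon)^{O(1)})$ candidate supergraphs, and for each one I test $k$NN-realizability by writing the defining ordinal distance constraints as a system of $(1/\varepsilon)^{O(1)}$ degree-$2$ polynomial inequalities in the $d\,|V(C_i')| = (1/\varepsilon)^{O(1)}$ coordinate variables and feeding it to Renegar's algorithm~\cite{renegar:1992}, which runs in $p^{O(q)}$ time. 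This produces, in $(1/\varepsilon)^{(1/\varepsilon)^{O(1)}}$ time per component, a quasi-$k$NN-realization $\phi_i : V(C_i) \to \mathbb{R}^d$ or a proof that none exists; over all $r \le n$ components the cost is $(1/\varepsilon)^{(1/\varepsilon)^{O(1)}} \cdot n$.

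Third I would assemble the $\phi_i$ into a single map $\phi$ on $V(G)$ by placing the components far apart: fix $\vec{x} \in \mathbb{R}^d$ with $\lVert \vec{x} \rVert_2$ much larger than the diameter of every $\phi_i(V(C_i))$ and set $\phi(v) = i\vec{x} + \phi_i(v)$ for $v \in V(C_i)$. Then for $u,v$ in the same component the set of points within distance $\lVert \phi(u)-\phi(v) \rVert_2$ of $\phi(u)$ is exactly the corresponding set computed inside $C_i$ by $\phi_i$, so $\sigma_\phi(u,v) = 1$ for every edge of every $C_i$, i.e.\ for every edge of $G - E$. Hence $\sum_{(u,v)\in E(G)} \sigma_\phi(u,v) \ge |E(G)| - |E| \ge (1-\varepsilon)|E(G)|$, so $\phi$ is a $(1-\varepsilon)$-approximate $k$NN-realization in the sense of Definition~\ref{def-apprxreal}. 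Summing the running times gives $O(n^{1+\alpha'}) + (1/\varepsilon)^{(1/\varepsilon)^{O(1)}} \cdot n = f(k,d,\varepsilon)\cdot n^{1+\alpha}$.

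I expect the only real subtlety to be bookkeeping rather than a genuine obstacle: one must check that each ``$G$ not realizable'' conclusion is sound (it relies on the monotonicity of quasi-$k$NN-realizations under taking subgraphs together with the correctness of Lemma~\ref{lem-edgeremoval}), that the perturbation making all pairwise distances distinct does not violate any required strict inequality, and that the separation parameter $\lVert \vec{x} \rVert_2$ is chosen only after all the $\phi_i$ are known. The nontrivial mathematical content — that $k$NN-realizable graphs have sublinear balanced cuts and can therefore be split into constant-size pieces by deleting an $\varepsilon$-fraction of the edges — is already packaged in Lemma~\ref{lem-smallcut} and Lemma~\ref{lem-edgeremoval}, so what remains is a reduction plus a polynomial-system solve on constant-size instances.
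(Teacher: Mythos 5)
Your proposal is correct and follows the paper's own argument essentially step for step: apply Lemma~\ref{lem-edgeremoval}, reduce each constant-size component to $k$NN-realizability of a $k$-regular supergraph, solve the resulting degree-$2$ polynomial systems via Renegar, and translate the components apart by multiples of a large vector $\vec{x}$ so that $\sigma_\phi(u,v)=1$ on all surviving edges. No gaps.
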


\newcommand{\coeff}[1]{C_{#1}}
\newcommand{\partcoeff}[2]{C_{#2}^{#1}}
\newcommand{\yleft}{y^-}
\newcommand{\yright}{y^+}

\section{\texorpdfstring{$k$}{k}NN-Realization in \texorpdfstring{$\mathbb{R}^1$}{R1}} \label{sec-1D}

To the best of our knowledge, the $k$NN-realization problem on the line does not seem to have been studied, and it is the focus of this section. A number of line embedding problems are $NP$-hard as mentioned earlier, including triplet constraints and betweenness problems~\cite{ChorSudan}. In the former, we are given a set of triplet constraints of the form $d(a,b) < d(a,c)$, while in the latter each constraint $(a,b,c)$ requires the ordering to satisfy $a < b < c$. Given the hardness result, the focus in these problems is on approximating the maximum number of satisfied constraints in the linear ordering~\cite{fan2020}. 

Unlike these intractable embedding problems on the line, we show that the partial order constraints implied by a $k$NN-realization problem have sufficiently  rich structure to admit a polynomial time solution. Most of the difficulty is in \emph{deciding} whether a $k$-regular directed graph $G$ is $k$NN-realizable on the line; if the answer is yes, then one can easily compute the embedding in polynomial time using linear programming.
(It is also worth pointing out that the decision problem for $k$NN-realization on the line is significantly more complicated than the special case of triplet constraints in \cite{fan2020} when \emph{all $\binom{n}{3}$ triples are specified}. Indeed, in that case, one can guess the leftmost point, and then all the remaining points are immediately determined by the triplet constraints. This is not the case in $k$NN-realization: fixing the leftmost point does not fix its neighbors' order.) 

The \emph{decision} version of the $k$NN-realization problem, of course, is easy \emph{if} we are given a permutation of $G$'s vertices $(v_1,\dots,v_n)$. In this case, we can easily compute a $k$NN-realization $\phi: V(G) \rightarrow \mathbb{R}^1$ satisfying $\phi(v_1) < \cdots < \phi(v_n)$, or decide that a feasible realization does not exist, by formulating the  problem as a linear program (LP) with $n$ variables $x_1,\dots,x_n$ and the following set of constraints
\begin{itemize}
    \item $x_i \leq x_j$ for all $i,j \in [n]$ with $i \leq j$,
    \item $\Delta_{i,j} < \Delta_{i,k}$ for all distinct $i,j,k \in [n]$ such that $(v_i,v_j) \in E(G)$ and $(v_i,v_k) \notin E(G)$, where $\Delta_{i,j} = x_{\max\{i,j\}} - x_{\min\{i,j\}}$ and $\Delta_{i,k} = x_{\max\{i,k\}} - x_{\min\{i,k\}}$.
\end{itemize}
Clearly, if $x_1,\dots,x_n$ satisfy these constraints, then setting $\phi(v_i) = x_i$ gives us the desired realization.
On the other hand, if $\phi$ is the realization we want, then the numbers $x_1,\dots,x_n$ where $x_i = \phi(v_i)$ must satisfy the constraints.
Therefore, we begin with this key problem: efficiently computing an ordering of the images of the vertices on $\mathbb{R}^1$.

\subsection{Finding the vertex ordering}

We say an ordering $(v_1,\dots,v_n)$ of $V(G)$ is a \textit{feasible vertex ordering} of $G$ if there exists a $k$NN-realization $\phi: V(G) \rightarrow \mathbb{R}^1$ of $G$ in $\mathbb{R}^1$ satisfying $\phi(v_1) < \cdots < \phi(v_n)$.
The goal of this section is to give an $O(kn)$-time algorithm for computing a feasible vertex ordering of $G$ (assuming it exists).
We may assume, without loss of generality, that $G$ is weakly-connected; otherwise we can consider each weakly-connected component of $G$ individually. 

Our first observation states two key properties of a feasible vertex ordering: (1) for each $v_i$ its $k$ out-neighbors form a contiguous block, and for different $v_i$'s their blocks have the same linear ordering as the vertex ordering, and (2) for each $v_i$ its in-neighbors also form a continuous block, which extends at most $k$ vertices to the left and at most $k$ to the right of $v_i$. (Recall that both $\mathsf{out}[v]$ and $\mathsf{in}[v]$ include $v$, for all vertices $v \in V(G)$.) See figure \ref{fig:obs1d}.

\begin{observation} \label{obs-vertex}
    A feasible vertex ordering $(v_1,\dots,v_n)$ of $G$ satisfies the following.
    \begin{enumerate}
        \item There exist $p_1,\dots,p_n \in [n-k]$ such that \textnormal{\bf (i)} $\mathsf{out}[v_i] = \{v_{p_i},\dots,v_{p_i+k}\}$ for all $i \in [n]$, \textnormal{\bf (ii)} $p_1 \leq \cdots \leq p_n$, and \textnormal{\bf (iii)} $p_i \leq i \leq p_i+k$ for all $i \in n$.
        \item There exist $q_1,\dots,q_n,q_1',\dots,q_n' \in [n]$ such that \textnormal{\bf (i)} $\mathsf{in}[v_i] = \{v_{q_i},\dots,v_{q_i'}\}$ for all $i \in [n]$, \textnormal{\bf (ii)} $q_1 \leq \cdots \leq q_n$, \textnormal{\bf (iii)} $q_1' \leq \cdots \leq q_n'$, and \textnormal{\bf (iv)} $q_i \leq q_i' \leq q_i+2k$ for all $i \in [n]$.
    \end{enumerate}
\end{observation}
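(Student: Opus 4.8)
The plan is to fix a $k$NN-realization $\phi:V(G)\to\mathbb{R}^1$ of $G$ witnessing the feasible ordering, write $x_j=\phi(v_j)$ so that $x_1<\cdots<x_n$, and argue entirely about the one-dimensional point set $\{x_1,\dots,x_n\}$. The single fact I will use repeatedly is the defining property of a $k$NN-realization, in the form: whenever $(v_i,u)\in E(G)$ and $(v_i,u')\notin E(G)$ with $u,u'\neq v_i$, we have $|x_i-\phi(u)|<|x_i-\phi(u')|$ strictly. Since $G$ is $k$-regular, this says exactly that for each $i$ the $k$ out-neighbors of $v_i$ are the $k$ points of $\{x_1,\dots,x_n\}\setminus\{x_i\}$ strictly closest to $x_i$, so $\mathsf{out}[v_i]$ is $\{x_i\}$ together with those $k$ closest points.

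For the first part I would first show that each $\mathsf{out}[v_i]$ is a block of $k+1$ consecutive vertices in the ordering. Writing $\mathsf{out}[v_i]=\{v_{j_0},\dots,v_{j_k}\}$ with $j_0<\cdots<j_k$, the membership $v_i\in\mathsf{out}[v_i]$ forces $j_0\le i\le j_k$, hence $x_{j_0}\le x_i\le x_{j_k}$. If some index $c$ with $j_0<c<j_k$ had $v_c\notin\mathsf{out}[v_i]$, then $x_c$ lies strictly between $x_i$ and whichever of $x_{j_0},x_{j_k}$ sits on the same side of $x_i$, so $|x_i-x_c|$ is strictly smaller than the distance from $x_i$ to that genuine out-neighbor of $v_i$ --- contradicting that the non-out-neighbor $v_c$ must be strictly farther from $x_i$ than every out-neighbor. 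Hence $\mathsf{out}[v_i]=\{v_{p_i},\dots,v_{p_i+k}\}$ for some $p_i$; that $p_i\in[n-k]$ and that $p_i\le i\le p_i+k$ are then immediate (the block lies inside $[1,n]$, and $v_i$ lies in it). The monotonicity $p_1\le\cdots\le p_n$ is the one step that needs a real argument: assume $L:=p_{i+1}<p_i$. Then $L<p_i\le i$ and, since $v_{i+1}$ lies in its own block, $L+k\ge i+1$; combining these, the index $L+k+1$ satisfies $p_i<L+k+1\le p_i+k\le n$, so $v_{L+k+1}$ exists and lies in $\mathsf{out}[v_i]$, hence is an out-neighbor of $v_i$, whereas $v_L\notin\mathsf{out}[v_i]$. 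Comparing distances from $x_i$ then gives $x_{L+k+1}+x_L<2x_i$. On the other hand $v_L=v_{p_{i+1}}$ is an out-neighbor of $v_{i+1}$ while $v_{L+k+1}$ is not, so comparing distances from $x_{i+1}$ gives $2x_{i+1}<x_L+x_{L+k+1}$. Together these force $x_{i+1}<x_i$, a contradiction.

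For the second part I would translate the in-neighborhoods into the $p_j$'s: $v_j\in\mathsf{in}[v_i]$ iff $v_i\in\mathsf{out}[v_j]$ iff $i-k\le p_j\le i$ (and $j=i$ trivially satisfies this, so $v_i$ is automatically in the set). Since $j\mapsto p_j$ is nondecreasing by the first part, $\{j: i-k\le p_j\le i\}$ is a contiguous block of indices; calling it $\{q_i,\dots,q_i'\}$ gives statement (i), with $q_i=\min\{j:p_j\ge i-k\}$ and $q_i'=\max\{j:p_j\le i\}$, both well-defined since $j=i$ lies in each underlying set. Increasing $i$ shrinks the set defining $q_i$ (so its minimum cannot decrease) and enlarges the set defining $q_i'$ (so its maximum cannot decrease), giving (ii) and (iii). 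For (iv): the block is nonempty so $q_i\le q_i'$; from $p_{q_i}\le q_i$ (the inequality $p_j\le j$ already shown) together with $p_{q_i}\ge i-k$ we get $q_i\ge i-k$, and from $q_i'-k\le p_{q_i'}$ together with $p_{q_i'}\le i$ we get $q_i'\le i+k$, hence $q_i'-q_i\le 2k$.

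The main obstacle is the out-block monotonicity $p_1\le\cdots\le p_n$: it is the only place a genuinely metric (rather than purely combinatorial) argument enters, and it requires care to verify that the auxiliary index $L+k+1$ is in range, that $v_{L+k+1}$ truly lands inside $v_i$'s block while $v_L$ falls outside it, and that the two strict distance inequalities combine with the correct signs to contradict $x_i<x_{i+1}$. Once that is in hand, the contiguity of the out-blocks is a short interleaving argument and every claim in the second part is a formal consequence of the monotonicity of $(p_i)$ together with the elementary bound $i-k\le p_i\le i$.
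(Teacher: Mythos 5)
Your proposal is correct and follows essentially the same route as the paper: derive contiguity of each out-block from the one-dimensional geometry, prove monotonicity of the $p_i$ by combining two strict out-neighbor/non-out-neighbor distance comparisons taken from two different viewpoints, and then deduce part~2 purely combinatorially from $\mathsf{in}[v_i]=\{v_j: i-k\le p_j\le i\}$ and the monotonicity of the $p_j$. The only cosmetic difference is that you restrict the monotonicity contradiction to consecutive indices $i,i+1$ with witnesses $v_L$ and $v_{L+k+1}$, whereas the paper compares general $i<j$ via $v_{p_j}$ and $v_{p_i+k}$; both arguments are sound.
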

\begin{proof}
Let $\phi:V(G) \rightarrow \mathbb{R}^1$ be a $k$NN-realization of $G$ in $\mathbb{R}^1$ satisfying $\phi(v_1) < \cdots < \phi(v_n)$.
Clearly, for each $i \in [n]$, the $k+1$ points in $\{\phi(v_1),\dots,\phi(v_n)\}$ closest to $\phi(v_i)$ are $\{\phi(v_{p_i}),\dots,\phi(v_{p_i+k})\}$ for some $p_i \in [n-k]$ such that $p_i \leq i \leq p_i+k$.
Furthermore, one can easily verify that $p_1 \leq \cdots \leq p_n$.
Assume $p_i > p_j$ for some $i,j \in [n]$ with $i<j$.
Then we have $\phi(v_{p_j}) < \phi(v_{p_i}) \leq \phi(v_i) < \phi(v_j) \leq \phi(v_{p_j+k}) < \phi(v_{p_i+k})$.
On the other hand, since $v_{p_i+k} \notin \{\phi(v_{p_j}),\dots,\phi(v_{p_j+k})\}$ and $v_{p_j} \notin \{\phi(v_{p_i}),\dots,\phi(v_{p_i+k})\}$,
\begin{equation*}
    \phi(v_j) - \phi(v_{p_j}) \leq \phi(v_{p_i+k}) - \phi(v_j) \leq \phi(v_{p_i+k}) - \phi(v_i) \leq \phi(v_i) - \phi(v_{p_j}),
\end{equation*}
which contradicts the fact $\phi(v_{p_j}) < \phi(v_{p_i}) \leq \phi(v_i) < \phi(v_j)$.
By the definition of $k$NN-realization, we see $\mathsf{out}[v_i] = \{v_{p_i},\dots,v_{p_i+k}\}$ for all $i \in [n]$.
This proves condition~1.

Condition~2 follows from condition~1.
Observe that for each $i \in [n]$, it holds that $\mathsf{in}[v_i] = \{v_j: p_j \leq i \leq p_j+k \} = \{v_j: i-k \leq p_j \leq i\}$.
Since $p_1 \leq \cdots \leq p_n$, this implies $\mathsf{in}[v_i] = \{v_{q_i},\dots,v_{q_i'}\}$ where $q_i = \min\{j: i-k \leq p_j \leq i\}$ and $q_i' = \max\{j: i-k \leq p_j \leq i\}$.
The monotoncities $q_1 \leq \cdots \leq q_n$ and $q_1' \leq \cdots \leq q_n'$ follow directly from the monotoncity $p_1 \leq \cdots \leq p_n$.
It suffices to show that $q_i \leq q_i' \leq q_i+2k$ for all $i \in [n]$.
The inequality $q_i \leq q_i'$ is trivial.
To see $q_i' \leq q_i+2k$, assume that $q_i' > q_i+2k$.
Then there exists $j,j' \in [n]$ with $j' > j+2k$ such that $i-k \leq p_j \leq i$ and $i-k \leq p_{j'} \leq i$.
By condition 1, we have $j \geq p_j$ and $p_{j'} \geq j' - k > j+k$.
Therefore, $p_{j'} > j+k \geq p_j+k \geq (i-k)+k = i$, which contradicts the fact $i-k \leq p_{j'} \leq i$.
This proves condition~2.
\end{proof}

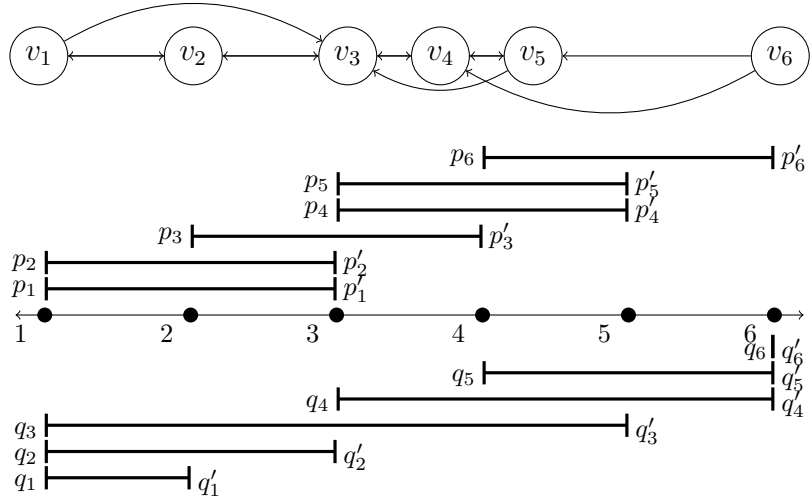
\begin{figure}
\begin{center}
\begin{scaletikzpicturetowidth}{0.75\textwidth}
\begin{tikzpicture}[ 
    main node/.style={circle,draw,font=\sffamily\Large\bfseries},
    scale=\tikzscale
]
\node[main node] (v1) at (0,0) {$v_1$};
\node[main node] (v2) at (2.5,0) {$v_2$};
\node[main node] (v3) at (5,0) {$v_3$};
\node[main node] (v4) at (6.5,0) {$v_4$};
\node[main node] (v5) at (8,0) {$v_5$};
\node[main node] (v6) at (12,0) {$v_6$};

\path[every node/.style={font=\sffamily\small}]
    (v1) edge[->] node {} (v2)
    (v1) edge[->, bend left] node {} (v3)
    (v2) edge[->] node {} (v1)
    (v2) edge[->] node {} (v3)
    (v3) edge[->] node {} (v2)
    (v3) edge[->] node {} (v4)
    (v4) edge[->] node {} (v3)
    (v4) edge[->] node {} (v5)
    (v5) edge[->, bend left] node {} (v3)
    (v5) edge[->] node {} (v4)
    (v6) edge[->, bend left] node {} (v4)
    (v6) edge[->] node {} (v5);
\end{tikzpicture}
\end{scaletikzpicturetowidth}

\begin{scaletikzpicturetowidth}{0.75\textwidth}
\begin{tikzpicture}[scale=\tikzscale, yscale=0.9,
    main node/.style={circle,draw,font=\sffamily\Large\bfseries}, 
    point/.style={circle, fill, inner sep=2pt},
    lbl/.style={circle, inner sep=0pt}
]

\draw[<->,step=0.5cm] (0,3) -- (13.5,3);

\node[point] (1) [label=185:{$1$}] at (0.5,3) {};
\node[point] (2) [label=185:{$2$}] at (3,3) {};
\node[point] (2) [label=185:{$3$}] at (5.5,3) {};
\node[point] (2) [label=185:{$4$}] at (8,3) {};
\node[point] (2) [label=185:{$5$}] at (10.5,3) {};
\node[point] (2) [label=185:{$6$}] at (13,3) {};

\draw[|-|,very thick] (0.5, 3.5) -- (5.5, 3.5);
\draw[|-|,very thick] (0.5, 4) -- (5.5, 4);
\draw[|-|,very thick] (3, 4.5) -- (8, 4.5);
\draw[|-|,very thick] (5.5, 5) -- (10.5, 5);
\draw[|-|,very thick] (5.5, 5.5) -- (10.5, 5.5);
\draw[|-|,very thick] (8, 6) -- (13, 6);

\draw[|-|,very thick] (0.5, 0-.1) -- (3, 0-.1);
\draw[|-|,very thick] (0.5, .5-.1) -- (5.5, .5-.1);
\draw[|-|,very thick] (0.5, 1-.1) -- (10.5, 1-.1);
\draw[|-|,very thick] (5.5, 1.5-.1) -- (13, 1.5-.1);
\draw[|-|,very thick] (8, 2-.1) -- (13, 2-.1);
\draw[|-|,very thick] (12.99999, 2.5-.1) -- (13, 2.5-.1);

\node[label] (p1) [label=left:{$p_1$}] at (0.7,3.5){};
\node[label] (p2) [label=left:{$p_2$}] at (0.7,4){};
\node[label] (p3) [label=left:{$p_3$}] at (3.2,4.5){};
\node[label] (p4) [label=left:{$p_4$}] at (5.7,5){};
\node[label] (p5) [label=left:{$p_5$}] at (5.7,5.5){};
\node[label] (p6) [label=left:{$p_6$}] at (8.2,6){};
\node[label] (pp1) [label=right:{$p_1'$}] at (5.3,3.5){};
\node[label] (pp2) [label=right:{$p_2'$}] at (5.3,4){};
\node[label] (pp3) [label=right:{$p_3'$}] at (7.8,4.5){};
\node[label] (pp4) [label=right:{$p_4'$}] at (10.3,5){};
\node[label] (pp5) [label=right:{$p_5'$}] at (10.3,5.5){};
\node[label] (pp6) [label=right:{$p_6'$}] at (12.8,6){};

\node[label] (q1) [label=left:{$q_1$}] at (0.7,0-.2){};
\node[label] (q2) [label=left:{$q_2$}] at (0.7,0.5-.2){};
\node[label] (q3) [label=left:{$q_3$}] at (0.7,1-.2){};
\node[label] (q4) [label=left:{$q_4$}] at (5.7,1.5-.2){};
\node[label] (q5) [label=left:{$q_5$}] at (8.2,2-.2){};
\node[label] (q6) [label=left:{$q_6$}] at (13.2,2.5-.2){};
\node[label] (qq1) [label=right:{$q_1'$}] at (2.8,0-.2){};
\node[label] (qq2) [label=right:{$q_2'$}] at (5.3,0.5-.2){};
\node[label] (qq3) [label=right:{$q_3'$}] at (10.3,1-.2){};
\node[label] (qq4) [label=right:{$q_4'$}] at (12.8,1.5-.2){};
\node[label] (qq5) [label=right:{$q_5'$}] at (12.8,2-.2){};
\node[label] (qq6) [label=right:{$q_6'$}] at (12.8,2.5-.2){};

\end{tikzpicture}
\end{scaletikzpicturetowidth}
\end{center}
\caption{A $2$-regular graph on $6$ vertices and its realization (top). 
The values of $p, p', q, q'$ for each vertex illustrate a natural ordering of the in and out-neighborhoods corresponding to the feasible vertex ordering $v_1, \cdots, v_6$ (bottom). 
}
\label{fig:obs1d}
\end{figure}

Consider the equivalence relation $\sim$ defined on $V(G)$ as $u \sim v$ iff $\mathsf{out}[u] = \mathsf{out}[v]$.
Let $\mathcal{C}_G$ be the set of equivalence classes of this relation, which is a partition of $V(G)$ into groups of vertices with the same set of out neighbors.
One can easily compute $\mathcal{C}_G$ in $O(k^2 n)$ time as follows. Given a vertex $v \in V(G)$, we check whether
$\mathsf{out}[u] = \mathsf{out}[v]$, for all $u \in \mathsf{out}[v]$, which takes $O(k^2)$ time since $|\mathsf{out}[v]| = k+1$ and each equality test takes $O(k)$ time. Thus, the time for computing all classes is $O(k^2 n)$.  Interestingly, we use Observation~\ref{obs-vertex} to compute $\mathcal{C}_G$ in $O(kn)$ time, if $G$ is $k$NN-realizable in $\mathbb{R}^1$, as shown in the following Lemma.

\begin{lemma} \label{lem-classify}
    Let $G$ be a $k$-regular directed graph of $n$ vertices.
    One can compute in $O(kn)$ time a partition $\mathcal{C}$ of $V(G)$ such that $\mathcal{C} = \mathcal{C}_G$ if $G$ is $k$NN-realizable in $\mathbb{R}^1$.
\end{lemma}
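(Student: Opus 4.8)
The plan is to avoid computing $\mathsf{out}[v]$-equality tests directly (which costs $O(k^2)$ per vertex) and instead exploit the structural rigidity from Observation~\ref{obs-vertex}. The key idea is that if $G$ is $k$NN-realizable in $\mathbb{R}^1$ with feasible vertex ordering $(v_1,\dots,v_n)$, then $u \sim v$ iff $p_u = p_v$, i.e.\ the equivalence classes of $\sim$ are exactly the maximal runs of consecutive vertices (in the feasible ordering) sharing a common out-neighborhood block $\{v_{p},\dots,v_{p+k}\}$. So the classes are intervals in the feasible ordering, and moreover by condition~1(iii) each such interval has length at most $k+1$ (since $p \le i \le p+k$ forces the indices $i$ with $p_i = p$ to lie in $\{p,\dots,p+k\}$). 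This gives a clean target: partition $V(G)$ into the ``out-neighborhood classes,'' each of size $\le k+1$.

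First I would observe that membership in the same class can be certified \emph{locally}: if $u \sim v$ then $v \in \mathsf{out}[u]$ and $u \in \mathsf{out}[v]$ (since $\mathsf{out}[u]=\mathsf{out}[v]$ contains both $u$ and $v$), so the class of $v$ is a subset of $\mathsf{out}[v]$, which has only $k+1$ elements. Thus I only need to decide, for each $v$ and each of the $k$ out-neighbors $u$ of $v$, whether $\mathsf{out}[u]=\mathsf{out}[v]$ — but I want to do all these tests in $O(kn)$ total rather than $O(k^2n)$. The trick is to pick a canonical representative and use counting. Concretely: compute, for each vertex $w$, its in-degree $d^-(w) = |\mathsf{in}[w]| - 1$ by scanning all $kn$ edges once. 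Then build a candidate partition greedily — process vertices, and to test $\mathsf{out}[u] = \mathsf{out}[v]$ when $u \in \mathsf{out}[v]$, it suffices (given that both sets have size exactly $k+1$) to check that every element of $\mathsf{out}[v]$ lies in $\mathsf{out}[u]$; amortize by noting that once a class is identified, all $\binom{|\text{class}|}{2}$ pairwise relations among its members are settled, and $\sum |\text{class}_j|^2 \le (k+1)\sum|\text{class}_j| = (k+1)n = O(kn)$ since each class has size $\le k+1$.

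The main obstacle — and the place where the $k$NN-realizability hypothesis is essential — is that a priori a class could be large (an adversarial graph not realizable in $\mathbb{R}^1$ might have $\Theta(n)$ vertices with the same out-neighborhood), which would blow up the quadratic sum above. I would handle this as follows: run the algorithm under the working assumption that every class has size $\le k+1$; if at any point a class grows beyond $k+1$ elements, abort and output an arbitrary partition — this is harmless because the lemma only requires $\mathcal{C}=\mathcal{C}_G$ \emph{when $G$ is $k$NN-realizable}, and in that case (by Observation~\ref{obs-vertex}, condition~1(iii)) the abort never triggers. With the size bound enforced, every step — the single edge scan, the per-vertex work of at most $k+1$ set-membership checks against a size-$(k+1)$ set, and the union-find or bucketing to merge a vertex into a discovered class — runs in $O(kn)$ total. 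The correctness argument then reduces to: (a) when $G$ is realizable, the classes produced coincide with the runs $\{i : p_i = p\}$, hence with $\mathcal{C}_G$; this follows because $u\sim v \iff \mathsf{out}[u]=\mathsf{out}[v]$ by definition, and the algorithm tests exactly this equality for every pair that could possibly be equivalent (namely pairs inside a common out-neighborhood). I expect verifying that the local membership tests faithfully detect $\sim$ — in particular that no two vertices with $\mathsf{out}[u]=\mathsf{out}[v]$ fail to be placed together — to be the only subtle point, and it follows from the simple fact that $\mathsf{out}[u]=\mathsf{out}[v]$ implies $u,v$ are mutually out-neighbors so they are compared.
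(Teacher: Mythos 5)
Your structural observations are all correct (classes are intervals of size at most $k+1$ in the feasible ordering, each class is contained in $\mathsf{out}[v]$ for any of its members, and aborting when a class exceeds $k+1$ is harmless), but the running-time analysis has a genuine gap. Your amortization $\sum_j |\text{class}_j|^2 \leq (k+1)n$ only accounts for the comparisons between vertices that turn out to be \emph{equivalent}. It says nothing about the cost of the comparisons that come back negative: for each class representative $v$ you must test $\mathsf{out}[u]=\mathsf{out}[v]$ for all $k$ out-neighbors $u$ of $v$, and each such test costs $\Theta(k)$ even when it fails (in a realizable instance, consecutive blocks overlap in all but one element, so a failed test cannot terminate early). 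When most classes are singletons --- which happens for generic point sets on the line --- this is $\Theta(k^2)$ per class and $\Theta(k^2 n)$ overall. That is exactly the ``easy'' bound the paper states before the lemma and which the lemma is designed to beat; the in-degree precomputation you mention is never actually used to avoid it.

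The missing idea is a further consequence of Observation~\ref{obs-vertex}: fix a pivot $v$ and bucket the vertices $u\in\mathsf{in}[v]$ by the value $i=|\mathsf{out}[u]\cap\mathsf{out}[v]|$ (all intersections computable in $O(k\,|\mathsf{in}[v]|)$ total by marking $\mathsf{out}[v]$ once). The point is that each bucket meets at most \emph{two} classes of $\mathcal{C}_G$ --- the block of $u$ must end at position $p_v+i-1$ or start at position $p_v+k-i+1$ --- so a single greedy pass suffices: keep one class $C$ of vertices all equal to the first one seen, dump the rest into $C'$, and realizability guarantees $C'$ is itself a single class without ever verifying pairwise equality inside it. Each vertex $u$ then participates in exactly one $O(k)$ equality test (against the current representative of $C$), and processing all of $\mathsf{in}[v]$ at once and removing it from further consideration makes the pivots' in-neighborhoods disjoint, giving $O(kn)$ overall. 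Without some device of this kind that charges each vertex $O(k)$ rather than each representative $O(k^2)$, your algorithm does not meet the stated bound.
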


\begin{proof}
The algorithm for computing $\mathcal{C}$ is shown in Algorithm~\ref{alg-classify}.
Initially, we set $V = V(G)$ and $\mathcal{C} = \emptyset$.
In each iteration of the while-loop, we pick an arbitrary vertex $v \in V$ and try to partition the vertices in $I = \mathsf{in}[v] \cap V$ into classes.
\begin{algorithm}[t]
    \caption{\textsc{Classify}$(G)$}
    \begin{algorithmic}[1]
        \State $V \leftarrow V(G)$ and $\mathcal{C} \leftarrow \emptyset$
        \While{$V \neq \emptyset$}
            \State $v \leftarrow$ an arbitrary vertex in $V$
            \State $I \leftarrow \mathsf{in}[v] \cap V$
            \For{$i = 1,\dots,k+1$}
                \State $C,C' \leftarrow \emptyset$
                \For{every $u \in I$ such that $|\mathsf{out}[u] \cap \mathsf{out}[v]| = i$}
                    \If{$\mathsf{out}[u] = \mathsf{out}[x]$ for all $x \in C$}{ $C \leftarrow C \cup \{u\}$}
                    \Else{ $C' \leftarrow C' \cup \{u\}$}
                    \EndIf
                \EndFor
                \If{$C \neq \emptyset$}{ $\mathcal{C} \leftarrow \mathcal{C} \cup \{C\}$}
                \EndIf
                \If{$C' \neq \emptyset$}{ $\mathcal{C} \leftarrow \mathcal{C} \cup \{C'\}$}
                \EndIf
            \EndFor
            \State $V \leftarrow V \backslash I$
        \EndWhile
        \State \textbf{return} $\mathcal{C}$
    \end{algorithmic}
    \label{alg-classify}
\end{algorithm}
These classes are then added to $\mathcal{C}$ and the vertices in $I$ are removed from $V$.
We keep doing this until $V = \emptyset$, and finally return $\mathcal{C}$ as the output.
The set $I$ is partitioned as follows.
For each $i \in [k+1]$, we partition the vertices $u \in I$ with $|\mathsf{out}[u] \cap \mathsf{out}[v]| = i$ into two classes $C$ and $C'$.
The classification is done in a greedy manner: we consider these vertices one by one, and for each vertex $u$ considered, if $\mathsf{out}[u] = \mathsf{out}[x]$ for all $x \in C$, we add $u$ to $C$, otherwise we add $u$ to $C'$.
Then we add $C$ and $C'$ to $\mathcal{C}$ if they are nonempty.

To analyze the time complexity of Algorithm~\ref{alg-classify}, we show that each iteration of the while-loop takes $O(|\mathsf{in}[v]|+k|I|)$ time.
Computing $I$ can be done in $O(|\mathsf{in}[v]|)$ time.
After that, we compute $\mathsf{out}[u] \cap \mathsf{out}[v]$ for all $u \in I$, which takes $O(k|I|)$ time.
When considering a specific $u \in I$ with $|\mathsf{out}[u] \cap \mathsf{out}[v]| = i$, we need to test whether $\mathsf{out}[u] = \mathsf{out}[x]$ for all $x \in C$, in order to decide whether $u$ belongs to $C$ or $C'$.
Note that the construction of $C$ guarantees that all $x \in C$ have the same $\mathsf{out}[x]$.
Thus, we only need to pick an arbitrary $x \in C$ and test whether $\mathsf{out}[u] = \mathsf{out}[x]$, which takes $O(k)$ time.
Thus, each iteration of the while-loop takes $O(|\mathsf{in}[v]|+k|I|)$ time, and the overall running time of Algorithm~\ref{alg-classify} is $O(kn)$, because $\sum_{v \in V(G)} \mathsf{in}[v] = O(kn)$ and the sets $I$ in different iterations are disjoint.

Finally, we show the correctness of Algorithm~\ref{alg-classify}.
Observe that for each $v \in V(G)$, $\mathsf{in}[v]$ is the union of some classes in $\mathcal{C}_G$.
Indeed, for $C \in \mathcal{C}_G$, we have $C \subseteq \mathsf{in}[v]$ if $v \in \mathsf{out}[C]$ and $C \cap \mathsf{in}[v] = \emptyset$ if $v \not\in \mathsf{out}[C]$, where $\mathsf{out}[C] = \mathsf{out}[u]$ for arbitrary $u \in C$.
Thus, in each iteration of the while-loop, both $V$ and $I$ are unions of classes in $\mathcal{C}_G$.
It suffices to show that the set $I$ is partitioned correctly in each iteration.
Consider an iteration, and let $v$ be the vertex picked.
Let $(v_1,\dots,v_n)$ be a feasible vertex ordering of $G$, and suppose $v = v_j$.
Also, let $p_1,\dots,p_n \in [n-k]$ be the indices in condition~1 of Observation~\ref{obs-vertex}.
So we have $\mathsf{out}[v] = \{v_{p_j},\dots,v_{p_j+k}\}$
Clearly, for $u,u' \in I$, if $|\mathsf{out}[u] \cap \mathsf{out}[v]| \neq |\mathsf{out}[u'] \cap \mathsf{out}[v]|$, then $u$ and $u'$ do not belong to the same class in $\mathcal{C}_G$.
Algorithm~\ref{alg-classify} partitions all vertices $u \in I$ with $|\mathsf{out}[u] \cap \mathsf{out}[v]| = i$ into two classes $C$ and $C'$, where $\mathsf{out}[x] = \mathsf{out}[y]$ for all $x,y \in C$ and $\mathsf{out}[x] \neq \mathsf{out}[y]$ for all $x \in C$ and $y \in C'$.
By construction, we have $C \in \mathcal{C}_G$.
To see $C' \in \mathcal{C}_G$ as well, observe that if $|\mathsf{out}[u] \cap \mathsf{out}[v]| = i$ with $u = v_{j'}$ for some $j'$, then either $p_{j'} + k = p_j + i - 1$ and $\mathsf{out}[u] = \{v_{p_{j'}},\dots,v_{p_j+i-1}\}$, or $p_{j'} = p_j + k - i + 1$ and $\mathsf{out}[u] = \{v_{p_j+k-i+1},\dots,v_{p_{j'}+k}\}$, by Observation~\ref{obs-vertex}.
In other words, the vertices $u \in I$ with $|\mathsf{out}[u] \cap \mathsf{out}[v]| = i$ belong to (at most) two classes in $\mathcal{C}_G$.
As $C \in \mathcal{C}_G$, we must have $C' \in \mathcal{C}_G$ as well.
\end{proof}
Write $r = |\mathcal{C}_G|$.
Let $(v_1,\dots,v_n)$ be a feasible vertex ordering of $G$.
Observation~\ref{obs-vertex} implies that each class $C \in \mathcal{C}_G$ is a set of consecutive vertices in the sequence $(v_1,\dots,v_n)$, i.e., $C = \{v_\alpha,\dots,v_\beta\}$ for some $\alpha,\beta \in [n]$.
Define $\mathsf{out}[C] = \mathsf{out}[v]$ for an arbitrary vertex $v \in C$.
Therefore, there is a natural ordering $(C_1,\dots,C_r)$ of the classes in $\mathcal{C}_G$, in which the indices of the vertices in $C_i$ are smaller than the indices of the vertices in $C_j$ for all $i,j \in [r]$ with $i < j$.
We call $(C_1,\dots,C_r)$ a \textit{feasible class ordering} of $G$.

\begin{observation} \label{obs-class}
    If $(C_1,\dots,C_r)$ is the feasible class ordering of $G$ corresponding to a feasible vertex ordering $(v_1,\dots,v_n)$ of $G$, then there exist $c_1,\dots,c_r \in [n-k]$ satisfying
    \begin{itemize}
        \item $\mathsf{out}[C_i] = \{v_{c_i},\dots,v_{c_i+k}\}$ for all $i \in [r]$,
        \item $c_1 < \cdots < c_r$,
        \item if $G$ is weakly-connected, then $c_{i+1} \leq c_i+k$ for all $i \in [r-1]$.
    \end{itemize}
\end{observation}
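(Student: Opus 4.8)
The plan is to read all three bullets off condition~1 of Observation~\ref{obs-vertex}. Fix a feasible vertex ordering $(v_1,\dots,v_n)$ together with the indices $p_1,\dots,p_n\in[n-k]$ it provides, so that $\mathsf{out}[v_j]=\{v_{p_j},\dots,v_{p_j+k}\}$, $p_1\le\cdots\le p_n$, and $p_j\le j\le p_j+k$ for all $j$. Since $\mathsf{out}[v_j]$ is an interval of $k+1$ consecutive vertices of the ordering, $p_j$ is exactly the smallest index occurring in $\mathsf{out}[v_j]$; hence two vertices are $\sim$-equivalent precisely when they share the same $p$-value. For each class $C$ I therefore let $c(C)$ denote this common $p$-value of its members and set $c_i=c(C_i)$. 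This immediately gives $\mathsf{out}[C_i]=\{v_{c_i},\dots,v_{c_i+k}\}$ with $c_i\in[n-k]$, which is the first bullet.

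For the second bullet, recall (as already noted just before the observation) that the monotonicity of the $p_j$ makes each class a block of consecutive vertices of the ordering, so $(C_1,\dots,C_r)$ is well defined and $C_i$ lies entirely to the left of $C_{i+1}$. Choosing $v_a\in C_i$ and $v_b\in C_{i+1}$ with $a<b$ and using $p_a\le p_b$ gives $c_i\le c_{i+1}$, hence $c_1\le\cdots\le c_r$; and $c_i=c_{i+1}$ would force $\mathsf{out}[C_i]=\mathsf{out}[C_{i+1}]$, contradicting that $C_i$ and $C_{i+1}$ are distinct classes. Thus $c_1<\cdots<c_r$.

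The only real content is the third bullet. Assume $G$ is weakly-connected, and suppose toward a contradiction that $c_{i+1}>c_i+k$ for some $i\in[r-1]$. Put $S=C_1\cup\cdots\cup C_i$ and $T=C_{i+1}\cup\cdots\cup C_r$, which partition $V(G)$ into two nonempty sets. Using $p_j\le j\le p_j+k$ together with $c_1<\cdots<c_r$: every vertex of $S$ has index at most $c_i+k$, while every vertex of $T$ has index at least $c_{i+1}>c_i+k$. Then no edge of $G$ can join $S$ and $T$ in either direction: an edge out of $v_a\in S$ has head among $\{v_{p_a},\dots,v_{p_a+k}\}$, whose indices are all at most $c_i+k$, so the head lies in $S$; an edge out of $v_b\in T$ has head among indices at least $p_b\ge c_{i+1}>c_i+k$, so the head lies in $T$. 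Hence the underlying undirected graph has no $S$--$T$ edge, contradicting weak-connectedness, and therefore $c_{i+1}\le c_i+k$.

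The main (and essentially only) delicate point is the third bullet: one must use \emph{both} inequalities $p_j\le j$ and $j\le p_j+k$ from Observation~\ref{obs-vertex} to pin down the index ranges of the vertices in $S$ and in $T$, and must rule out edges in both directions. The first two bullets are routine bookkeeping on top of the observation.
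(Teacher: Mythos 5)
Your proposal is correct and follows essentially the same route as the paper: define $c_i$ as the common $p$-value of the class (the paper takes $c_i=p_s$ for the first vertex $v_s$ of $C_i$), get monotonicity and strictness from $p_1\le\cdots\le p_n$ and distinctness of classes, and derive the third bullet by showing that $c_{i+1}>c_i+k$ would disconnect $\bigcup_{j\le i}C_j$ from $\bigcup_{j>i}C_j$. Your index-range bookkeeping via $p_j\le j\le p_j+k$ is just an unpacked version of the paper's observation that $C_j\subseteq\mathsf{out}[C_j]$ and that the two unions of out-sets are disjoint.
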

\begin{proof}
Let $p_1,\dots,p_n \in [n-k]$ be the indices in condition~1 of Observation~\ref{obs-vertex}, for the feasible vertex ordering $(v_1,\dots,v_n)$.
For each $i \in [r]$, set $c_i = p_s$ where $s = 1+ \sum_{j=1}^{i-1} |C_j|$, and we then have $\mathsf{out}[C_i] = \mathsf{out}[v_s] = \{v_{c_i},\dots,v_{c_i+k}\}$.
Since $p_1 \leq \cdots \leq p_n$, we have $c_1 \leq \cdots \leq c_r$.
Furthermore, for all $i \in [r-1]$, we have $c_i \neq c_{i+1}$, simply because $\mathsf{out}[C_i] \neq \mathsf{out}[C_{i+1}]$.
Thus, $c_1 < \cdots < c_r$.
To see the last property, assume $c_{i+1} > c_i+k$ for some $i \in [r-1]$.
Then $(\bigcup_{j=1}^i \mathsf{out}[C_j]) \cap (\bigcup_{j=i+1}^r \mathsf{out}[C_j]) = \emptyset$.
Note that $\bigcup_{j=1}^i C_j \subseteq \bigcup_{j=1}^i \mathsf{out}[C_j]$ and $\bigcup_{j=i+1}^r C_j \subseteq \bigcup_{j=i+1}^r \mathsf{out}[C_j]$.
So there is no edge between $\bigcup_{j=1}^i C_j$ and $\bigcup_{j=i+1}^r C_j$, which implies that $G$ is not weakly connected.
\end{proof}

To compute a feasible vertex ordering of $G$, we first compute a feasible class ordering, using the previous observation.
\begin{lemma} \label{lem-classorder}
    Let $G$ be a weakly-connected $k$-regular directed graph of $n$ vertices.
    Given $\mathcal{C}_G$, one can compute in $O(kn)$ time an ordering of $\mathcal{C}_G$, which is a feasible class ordering of $G$ if $G$ is $k$NN-realizable in $\mathbb{R}^1$.
\end{lemma}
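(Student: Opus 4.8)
The plan is to build a feasible class ordering greedily, maintaining a \emph{contiguous} sub‑block of the (unknown) true class ordering and extending it one class at a time from an arbitrary seed. The structural backbone is Observation~\ref{obs-class}: if $G$ is $k$NN-realizable, then for a feasible vertex ordering $(v_1,\dots,v_n)$ the class out-neighborhoods are length-$(k+1)$ intervals $\mathsf{out}[C_i]=\{v_{c_i},\dots,v_{c_i+k}\}$ with $c_1<\dots<c_r$, and since $G$ is weakly connected $c_{i+1}\le c_i+k$, so consecutive out-neighborhoods overlap in at least one vertex. The consequence I would use is: if $P=\{C_{i_{\min}},\dots,C_{i_{\max}}\}$ is a contiguous block already placed and $U=\bigcup_{C\in P}\mathsf{out}[C]$, then among all $C\notin P$ the overlap $|\mathsf{out}[C]\cap U|$ decreases (strictly, while it stays positive) as the interval $\mathsf{out}[C]$ slides away from the placed block on either side; hence a class maximizing this overlap is always one of the two block-adjacent classes $C_{i_{\min}-1}$, $C_{i_{\max}+1}$ (whichever exists), each of which overlaps $U$ by weak connectivity. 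So a maximum-overlap unplaced class is always a correct class to append to one of the two ends of the block.

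The algorithm is then: start with an arbitrary class as a one-element block, kept as a sequence $S$, with $U$ its union of out-neighborhoods; repeat $r-1$ times, each time computing a not-yet-placed class $C^\star$ maximizing $|\mathsf{out}[C^\star]\cap U|$, comparing $|\mathsf{out}[C^\star]\cap\mathsf{out}[D]|$ for the two end classes $D$ of $S$, appending $C^\star$ at whichever end has the larger overlap (ties occur only while $|S|=1$, where either choice works), and updating $U$; finally return $S$.

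For correctness, assume $G$ is $k$NN-realizable. An induction on $|S|$ shows that $S$ always equals a contiguous block $\{C_{i_{\min}},\dots,C_{i_{\max}}\}$ of the true ordering listed either forwards or in reverse, that a maximum-overlap unplaced class is $C_{i_{\min}-1}$ or $C_{i_{\max}+1}$, and that the end-overlap comparison appends it next to the correct end (for $|S|\ge2$ the overlap with the adjacent end strictly exceeds that with the far end because then $c_{i_{\min}}<c_{i_{\max}}$). At termination $S$ is $(C_1,\dots,C_r)$ or its reverse, and both are feasible class orderings since reflecting $\mathbb{R}^1$ preserves $k$NN-realizations, so reversing a feasible vertex ordering keeps it feasible; if $G$ is not $k$NN-realizable the structural lemmas may fail and the output may be meaningless, which the statement allows, so one only has to check that the algorithm still terminates.

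For the $O(kn)$ bound, precompute for each vertex $v$ the list of classes whose out-neighborhood contains $v$, of total length $\sum_{C}|\mathsf{out}[C]|=r(k+1)\le(k+1)n$; maintain $s(C)=|\mathsf{out}[C]\cap U|$ for each unplaced $C$, bumping it along the precomputed lists of the vertices newly added to $U$ whenever a class is appended, so the total update cost over the run is $O(kn)$; and, since $0\le s(C)\le k+1$, keep the unplaced classes in $k+2$ score-buckets so that a maximizer is retrieved in $O(k)$ per step, for $O(kr)=O(kn)$ overall, with the end-overlap comparisons also $O(k)$ per step. The step I expect to be the main obstacle is the correctness of the greedy choice — proving that a globally maximum-overlap unplaced class must be block-adjacent and that the end-overlap test never appends it on the wrong side; once the interval picture of Observation~\ref{obs-class} is in hand both follow from elementary monotonicity of overlaps of equal-length intervals, and everything else (the arbitrary seed, the possibly reversed output, and fitting the bookkeeping into linear time via bucketed scores) is routine.
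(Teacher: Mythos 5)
Your greedy block-extension algorithm is correct and meets the $O(kn)$ bound, but it is organized differently from the paper's. The paper seeds with an arbitrary class $X$, identifies an adjacent class $Y$ as the one maximizing $|\mathsf{out}[X]\cap\mathsf{out}[Y]|$, splits the remaining classes into a ``left'' and a ``right'' side according to whether $\mathsf{out}[X]\cap\mathsf{out}[C]\subseteq\mathsf{out}[X]\cap\mathsf{out}[Y]$, and then orders each side with a subroutine that works in \emph{batches}: it collects all unplaced classes overlapping a single reference class $D$, sorts that batch by decreasing overlap with $D$, appends it, and moves $D$ to the last class of the batch. Its running-time analysis hinges on the batches being disjoint and on each vertex lying in at most two of the reference out-neighborhoods. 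You instead grow one contiguous block from both ends simultaneously, one class per step, selecting the \emph{global} maximizer of $|\mathsf{out}[C]\cap U|$ over all unplaced $C$ (where $U$ is the union of placed out-neighborhoods) and disambiguating the end by comparing overlaps with the two extreme classes; this forces you to introduce the vertex-to-class incidence lists and the $k+2$ score buckets to retrieve the argmax in $O(k)$ per step, whereas the paper only ever examines the $O(k)$ classes meeting the current reference class. Both correctness arguments rest on the same interval picture from Observation~\ref{obs-class} (equal-length out-neighborhood intervals with strictly increasing, pairwise-overlapping-when-consecutive left endpoints), and your key monotonicity claims check out: the overlap with $U$ strictly decreases as the interval slides away from the block on either side so the maximizer is block-adjacent, weak connectivity guarantees the adjacent classes have positive overlap, and for $|S|\ge 2$ the near-end overlap strictly exceeds the far-end overlap so the placement test cannot err. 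Your approach buys a somewhat more uniform invariant (one block, one step type) at the price of extra data-structure bookkeeping; the paper's buys simpler bookkeeping at the price of the up-front $\mathcal{L}/\mathcal{R}$ split and a slightly more delicate charging argument for the batch loop.
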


\begin{proof}
The algorithm for computing the ordering of $\mathcal{C}_G$ is shown in Algorithm~\ref{alg-classorder}.
It mainly works as follows.
We first arbitrarily pick a class $X \in \mathcal{C}_G$ and let $Y \in \mathcal{C}_G \backslash \{X\}$ that maximizes $|\mathsf{out}[X] \cap \mathsf{out}[Y]|$.
Now consider a (unknown) feasible class ordering $(C_1,\dots,C_r)$ of $G$.
Suppose $X = C_i$.
By Observation~\ref{obs-class}, either $Y = C_{i-1}$ or $Y = C_{i+1}$.
Without loss of generality, assume $Y = C_{i+1}$, for otherwise we can consider $(C_r,\dots,C_1)$, which is also a feasible class ordering of $G$.
The remaining task is to compute the sequences $(C_1,\dots,C_{i-1})$ and $(C_{i+1},\dots,C_r)$.
Let $\mathcal{C} = \mathcal{C}_G \backslash \{X\}$.
Define $\mathcal{L} \subseteq \mathcal{C}$ (resp., $\mathcal{R} \subseteq \mathcal{C}$) as the sub-collection consisting of all $C \in \mathcal{C}$ such that $\mathsf{out}[X] \cap \mathsf{out}[C] \neq \emptyset$ and $\mathsf{out}[X] \cap \mathsf{out}[C] \nsubseteq \mathsf{out}[X] \cap \mathsf{out}[Y]$ (resp., $\mathsf{out}[X] \cap \mathsf{out}[C] \subseteq \mathsf{out}[X] \cap \mathsf{out}[Y]$).
By Observation~\ref{obs-class}, $\mathcal{L}$ (resp., $\mathcal{R}$) consists of the classes in $C \in \{C_1,\dots,C_{i-1}\}$ (resp., $C \in \{C_{i+1},\dots,C_r\}$) satisfying that $\mathsf{out}[X] \cap \mathsf{out}[C] \neq \emptyset$.
To compute $(C_1,\dots,C_{i-1})$ and $(C_{i+1},\dots,C_r)$, we need a sub-routine $\textsc{Find}$ presented in Algorithm~\ref{alg-classorder}, which will be discussed below.
Calling $\textsc{Find}(\mathcal{C} \backslash \mathcal{L},X)$ is supposed to give us $(C_{i+1},\dots,C_r)$, while calling $\textsc{Find}(\mathcal{C} \backslash \mathcal{R},X)$ gives us $(C_{i-1},\dots,C_1)$.
Finally, we return the sequence $\mathsf{rev}(\textsc{Find}(\mathcal{C} \backslash \mathcal{R},X)) + (X) + \textsc{Find}(\mathcal{C} \backslash \mathcal{L},X)$, which is just $(C_1,\dots,C_r)$.
Here $\mathsf{rev}(\cdot)$ is the function that reverses a sequence.
Before discussing the sub-routine $\textsc{Find}$, we first show that computing $Y$, $\mathcal{L}$, $\mathcal{R}$ can be done in $O(k^2)$ time.
Indeed, Observation~\ref{obs-class} implies that there are only $O(k)$ classes $C \in \mathcal{C}$ satisfying $\mathsf{out}[X] \cap \mathsf{out}[C] \neq \emptyset$.
Also, $\mathsf{out}[X] \cap \mathsf{out}[C] \neq \emptyset$ iff $C \subseteq \bigcup_{v \in \mathsf{out}[X]} \mathsf{in}[v]$.
By condition~2 of Observation~\ref{obs-vertex}, $\sum_{v \in \mathsf{out}[X]} |\mathsf{in}[v]| = O(k^2)$.
Thus, the $O(k)$ classes $C \in \mathcal{C}$ satisfying $\mathsf{out}[X] \cap \mathsf{out}[C] \neq \emptyset$ can be computed in $O(k^2)$ time.
After this, we compute the intersection $\mathsf{out}[X] \cap \mathsf{out}[C]$ for every such class $C$, which takes $O(k^2)$ time in total.
Once these intersections are computed, we obtain $Y$ directly.
To further compute $\mathcal{L}$ and $\mathcal{R}$ takes another $O(k^2)$ time, since there are only $O(k)$ classes to be considered and one can decide whether each one belongs to $\mathcal{L}$ or $\mathcal{R}$ in $O(k)$ time.

\begin{algorithm}
    \caption{\textsc{ClassOrder}$(G,\mathcal{C}_G)$}
    \begin{algorithmic}[1]
        \State $\mathcal{C} \leftarrow \mathcal{C}_G$
        \State pick an arbitrary $X \in \mathcal{C}$ and $\mathcal{C} \leftarrow \mathcal{C} \backslash \{X\}$
        \State $Y \leftarrow \arg\max_{C \in \mathcal{C}} |\mathsf{out}[X] \cap \mathsf{out}[C]|$
        \State $\mathcal{L} \leftarrow \{C \in \mathcal{C}: \emptyset \neq \mathsf{out}[X] \cap \mathsf{out}[C] \nsubseteq \mathsf{out}[X] \cap \mathsf{out}[Y]\}$
        \State $\mathcal{R} \leftarrow \{C \in \mathcal{C}: \emptyset \neq \mathsf{out}[X] \cap \mathsf{out}[C] \subseteq \mathsf{out}[X] \cap \mathsf{out}[Y]\}$       
        \State \textbf{return} $\mathsf{rev}(\textsc{Find}(\mathcal{C} \backslash \mathcal{R},X))+(X)+\textsc{Find}(\mathcal{C} \backslash \mathcal{L},X)$
        \medskip
        \Function{\textsc{Find}}{$\mathcal{A},D$}
            \State $S \leftarrow \mathsf{empty\ list}$
            \While{$\mathsf{true}$}
                \State $\mathcal{A}' \leftarrow \{A \in \mathcal{A}: \mathsf{out}[D] \cap \mathsf{out}[A] \neq \emptyset\}$
                \State $t \leftarrow |\mathcal{A}'|$
                \If{$t = 0$}{ \textbf{break}}
                \EndIf                
                \State sort $\mathcal{A}'$ as $(A_1,\dots,A_t)$ such that $|\mathsf{out}[D] \cap \mathsf{out}[A_1]| \geq \cdots \geq |\mathsf{out}[D] \cap \mathsf{out}[A_t]|$
                \State $S \leftarrow S + (A_1,\dots,A_t)$
                \State $\mathcal{A} \leftarrow \mathcal{A} \backslash \mathcal{A}'$
                \State $D \leftarrow A_t$
            \EndWhile
            \State \textbf{return} $S$
        \EndFunction
    \end{algorithmic}
    \label{alg-classorder}
\end{algorithm}

Next, we discuss how the sub-routine $\textsc{Find}$ works.
We only consider the call $\textsc{Find}(\mathcal{C} \backslash \mathcal{L},X)$ and show it gives us $(C_{i+1},\dots,C_r)$; the other call $\textsc{Find}(\mathcal{C} \backslash \mathcal{R},X)$ is symmetric.
We set $\mathcal{A} = \mathcal{C} \backslash \mathcal{L}$ and $D = X$.
We iteratively compute a sequence $S$ of classes in $\mathcal{A}$ as follows.
Initially, $S$ is the empty list.
In every iteration, we first compute the sub-collection $\mathcal{A}' \subseteq \mathcal{A}$ consisting of all $A \in \mathcal{A}$ satisfying $\mathsf{out}[D] \cap \mathsf{out}[A] \neq \emptyset$.
Let $t = |\mathcal{A}'|$.
If $t = 0$ (i.e., $\mathcal{A}' = \emptyset$), we terminate the procedure and return the current $S$.
Otherwise, we sort the classes in $\mathcal{A}'$ as $(A_1,\dots,A_t)$ such that $|\mathsf{out}[D] \cap \mathsf{out}[A_1]| \geq \cdots \geq |\mathsf{out}[D] \cap \mathsf{out}[A_t]|$.
We append $(A_1,\dots,A_t)$ to the end of $S$.
Then we remove $\mathcal{A}'$ from $\mathcal{A}$, set $D$ to be $A_t$, and start the next iteration.
Note that this procedure will finally terminate.
Indeed, in all but the last iteration, we have $\mathcal{A}' \neq \emptyset$ and thus the size of $\mathcal{A}$ decreases at the end of the iteration.

To show $S = \{C_{i+1},\dots,C_r\}$ when the procedure terminates, suppose $\mathcal{L} = \{C_{p+1},\dots,C_{i-1}\}$ and thus $\mathcal{C} \backslash \mathcal{L} = \{C_1,\dots,C_p,C_{i+1},\dots,C_r\}$.
We prove the following invariant: at the beginning of each iteration, we have $D = C_j$, $S = (C_{i+1},\dots,C_j)$, and $\mathcal{A} = \{C_1,\dots,C_p,C_{j+1},\dots,C_r\}$, for some $j \geq i$.
Initially, the invariant holds for $j = i$.
Suppose at the beginning of some iteration, the invariant holds for $j$.
We consider how $D$, $S$, and $\mathcal{A}$ change in the iteration.
The sub-collection $\mathcal{A}'$ computed in this iteration contains all classes $A \in \{C_1,\dots,C_p,C_{j+1},\dots,C_r\}$ with $\mathsf{out}[C_j] \cap \mathsf{out}[A] \neq \emptyset$.
Since $\mathsf{out}[C_1],\dots,\mathsf{out}[C_p]$ are all disjoint from $\mathsf{out}[C_i]$, they are also disjoint from $\mathsf{out}[C_j]$ by Observation~\ref{obs-class}.
Thus, $\mathcal{A}' \subseteq \{C_{j+1},\dots,C_r\}$.
Furthermore, Observation~\ref{obs-class} implies that $\mathcal{A}' = \{C_{j+1},\dots,C_{j+t}\}$ for some $t \in [r-j]$, and also $\mathsf{out}[C_j] \cap \mathsf{out}[C_{j+1}] \supsetneq \mathsf{out}[C_j] \cap \mathsf{out}[C_{j+t}]$.
This implies that the sorted sequence $(A_1,\dots,A_t)$ in this iteration is equal to $(C_{j+1},\dots,C_{j+t})$.
At the beginning of the next iteration, we have $D = C_{j'}$, $S = (C_{i+1},\dots,C_{j'})$, and $\mathcal{A} = \{C_1,\dots,C_p,C_{j'+1},\dots,C_r\}$, where $j' = j+t$, and hence we still have the invariant.
This shows that the invariant holds throughout the procedure.
Now suppose we are at the beginning of the last iteration.
At this point, we have $D = C_j$, $S = (C_{i+1},\dots,C_j)$, and $\mathcal{A} = \{C_1,\dots,C_p,C_{j+1},\dots,C_r\}$ for some $j \geq i$, by the invariant.
We claim that $j = r$.
Assume $j < r$.
Then $C_{j+1} \in \mathcal{A}$.
As this is the last iteration, we must have $\mathcal{A}' = \emptyset$, i.e., $\mathsf{out}[C_j] \cap \mathsf{out}[A] = \emptyset$ for all $A \in \mathcal{A}$.
However, since $G$ is weakly-connected, Observation~\ref{obs-class} implies that $\mathsf{out}[C_j] \cap \mathsf{out}[C_{j+1}] \neq \emptyset$, which in turn implies $C_{j+1} \in \mathcal{A}'$ and contradicts the fact $\mathcal{A}' = \emptyset$.
Therefore, $j = r$ and $S = (C_{i+1},\dots,C_r)$ when the procedure terminates.
This proves the correctness of \textsc{Find}, as well as Algorithm~\ref{alg-classorder}.

Finally, we analyze the time complexity of \textsc{Find} (and the entire algorithm).
Suppose the construction of $S$ has $\ell$ iterations, and let $D_1,\dots,D_\ell$ (resp., $\mathcal{A}_1',\dots,\mathcal{A}_\ell'$) be the $D$ (resp., $\mathcal{A}'$) in the $\ell$ iterations.
In the $p$-th iteration, computing $\mathcal{A}_p'$ can be done in $O(\sum_{v \in \mathsf{out}[D_p]} |\mathsf{in}[v]|)$ time, because if $A \in \mathcal{A}_p'$, then $A \subseteq \mathsf{in}[v]$ for some $v \in \mathsf{out}[D_p]$.
Sorting the classes in $\mathcal{A}_p'$ takes $O(k+|\mathcal{A}_p'|)$, since here the sorting keys are integers in $[k+1]$.
So the time cost of the $p$-th iteration is $O(k+|\mathcal{A}_p'|+\sum_{v \in \mathsf{out}[D_p]} |\mathsf{in}[v]|)$.
The overall time cost is then $O(k \ell + \sum_{p=1}^\ell |\mathcal{A}_p'| + \sum_{p=1}^\ell \sum_{v \in \mathsf{out}[D_p]} |\mathsf{in}[v]|)$.
Note that $\ell \leq r \leq n$ and $\mathcal{A}_1',\dots,\mathcal{A}_\ell'$ are disjoint.
Thus, $k \ell + \sum_{p=1}^\ell |\mathcal{A}_p'| = O(kn)$.
To bound $\sum_{p=1}^\ell \sum_{v \in \mathsf{out}[D_p]} |\mathsf{in}[v]|$, the key is to observe that $\mathsf{out}[D_p] \cap \mathsf{out}[D_q] = \emptyset$ if $p-q>1$.
Indeed, $D_p \in \mathcal{A}_{p-1}'$ and hence $D_p \in \mathcal{A}$ at the end of the $q$-th iteration if $p-q>1$, implying that $\mathsf{out}[D_p] \cap \mathsf{out}[D_q] = \emptyset$.
It follows that each vertex $v \in V(G)$ appears in some $\mathsf{out}[D_p]$ at most twice.
So we have $\sum_{p=1}^\ell \sum_{v \in \mathsf{out}[D_p]} |\mathsf{in}[v]| \leq \sum_{v \in V(G)} 2|\mathsf{in}[v]| = O(kn)$.
As a result, \textsc{Find} can be implemented in $O(kn)$ time.
Recall that the time cost of Algorithm~\ref{alg-classorder} excluding the calls of \textsc{Find} is $O(k^2)$, as argued before.
We can then conclude that the time complexity of Algorithm~\ref{alg-classorder} is $O(kn)$.
\end{proof}

Next, we show how to compute a corresponding feasible vertex ordering of $G$ given a feasible class ordering $(C_1,\dots,C_r)$.
By definition, we know that in the feasible vertex ordering, the vertices in $C_i$ must appear before the vertices in $C_j$ for all $i,j \in [r]$ with $i < j$.
So it suffices to figure out the ``local'' ordering of the vertices in each class $C_i$.
We do this by considering the in-neighbors of each vertex.
As observed before, for each $v \in V(G)$, $\mathsf{in}[v]$ is the union of several classes in $\mathcal{C}_G$, and in addition, $\mathsf{in}[v] = \bigcup_{i=p}^q C_i$ for some $p,q \in [r]$, by condition~2 of Observation~\ref{obs-vertex}.
It turns out that we can sort the vertices in each class according to the values of $p$ and $q$.
Formally, we prove the following lemma.

\begin{lemma} \label{lem-vertexorder}
    Let $G$ be a weakly-connected $k$-regular directed graph of $n$ vertices, and $r = |\mathcal{C}_G|$.
    Given an ordering $(C_1,\dots,C_r)$ of $\mathcal{C}_G$, one can compute in $O(kn)$ time an ordering $(v_1,\dots,v_n)$ of $V(G)$ such that if $(C_1,\dots,C_r)$ is a feasible class ordering of $G$, then $(v_1,\dots,v_n)$ is a feasible vertex ordering of $G$.
\end{lemma}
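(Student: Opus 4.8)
The plan is to fix the order \emph{between} classes to be $C_1,\dots,C_r$ and to determine the order \emph{within} each class from the in-neighborhoods. Recall that $\mathsf{in}[v]$ is a union of classes for every $v\in V(G)$, and in a feasible realization it is also an interval of consecutive vertices by condition~2 of Observation~\ref{obs-vertex}; hence $\mathsf{in}[v]=C_{\ell(v)}\cup\cdots\cup C_{h(v)}$ for suitable $\ell(v),h(v)\in[r]$, where $\ell(v)$ (resp.\ $h(v)$) is simply the smallest (resp.\ largest) index $i$ with $C_i\cap\mathsf{in}[v]\neq\emptyset$ --- a quantity depending only on the set $\mathsf{in}[v]$ and the given class ordering, not on any vertex ordering. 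The algorithm assigns to each $v$ its class index $\kappa(v)$, builds the lists $\mathsf{in}[v]$ in one pass over the out-adjacency lists, reads off $\ell(v)$ and $h(v)$ by scanning each $\mathsf{in}[v]$, and finally sorts all vertices by the key $\bigl(\kappa(v),\ell(v),h(v)\bigr)$ with three rounds of counting sort (each coordinate lies in $[r]\subseteq[n]$). Since $\sum_v|\mathsf{in}[v]|=n+|E(G)|=O(kn)$ and the final sort costs $O(n)$, the running time is $O(kn)$.

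For correctness, suppose $(C_1,\dots,C_r)$ is a feasible class ordering, witnessed by a feasible vertex ordering $(w_1,\dots,w_n)$ in which the vertices of $C_i$ precede those of $C_j$ whenever $i<j$; let $\phi$ be a $k$NN-realization with $\phi(w_1)<\cdots<\phi(w_n)$. Each class is a block of consecutive vertices in $(w_1,\dots,w_n)$, and the monotonicities of the in-interval endpoints in condition~2 of Observation~\ref{obs-vertex} force $\bigl(\ell(w_s),h(w_s)\bigr)\le\bigl(\ell(w_t),h(w_t)\bigr)$ coordinatewise whenever $s<t$. Consequently the vertices of a single class that share a common pair $(\ell,h)$ --- equivalently, a common in-neighborhood, since $(\ell(v),h(v))$ determines $\mathsf{in}[v]$, and they already share a common out-neighborhood --- form a contiguous sub-block, which I will call a \emph{twin group}; moreover distinct twin groups of a class appear in $(w_1,\dots,w_n)$ in strictly increasing lexicographic order of their $(\ell,h)$ value, which is exactly the order our sort produces. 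Hence our output $(v_1,\dots,v_n)$ is obtained from $(w_1,\dots,w_n)$ by permuting vertices only within individual twin groups.

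It then suffices to show that permuting the vertices inside one twin group $T=\{w_a,\dots,w_b\}$ keeps the ordering feasible; the general statement follows by composing such permutations. All vertices of $T$ have a common in-neighborhood $\mathsf I$ and a common out-neighborhood $\mathsf O$, and two structural facts drive the argument: (i) $T\subseteq\mathsf O$, because $\mathsf O$ is a block $\{w_c,\dots,w_{c+k}\}$ of $k+1$ consecutive vertices that contains each $w_j$ with $a\le j\le b$, forcing $[a,b]\subseteq[c,c+k]$; and (ii) for every vertex $x$ the intersection $\mathsf{out}[x]\cap T$ is either empty or all of $T$, since $w\in\mathsf{out}[x]\iff x\in\mathsf{in}[w]=\mathsf I$ is a condition not depending on which $w\in T$ we pick. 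Given a permutation $\pi$ of $\{a,\dots,b\}$, define $\phi'(w_{\pi(j)})=\phi(w_j)$ for $j\in\{a,\dots,b\}$ and $\phi'=\phi$ otherwise; this leaves the underlying point set unchanged, so for each vertex $x$ the set of the $k+1$ points of $\phi'$ closest to $\phi'(x)$ is the same set as under $\phi$, namely $\{\phi(u):u\in\mathsf{out}[x]\}$ (for $x\in T$ one uses that $\phi'(x)=\phi(w_j)$ for some $w_j\in T$ with $\mathsf{out}[w_j]=\mathsf O=\mathsf{out}[x]$). By (i) and (ii), the vertices that $\phi'$ places at those points are again exactly $\mathsf{out}[x]$: the part of $\mathsf{out}[x]$ outside $T$ is fixed by $\phi'$, and the part inside $T$ is either empty or all of $T$, on which $\pi$ only shuffles the $\phi$-positions among themselves. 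Hence $\phi'$ realizes the permuted ordering, so that ordering is feasible.

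The main obstacle is the last paragraph: it is the only place the geometry of $\phi$ is actually used, and one must check that reshuffling a twin group breaks no nearest-neighbor relation for \emph{any} vertex, inside or outside $T$ --- which is precisely where the two ``all-or-nothing'' facts (i) and (ii) are essential. Everything else is bookkeeping driven by Observation~\ref{obs-vertex}, and the time bound follows from $\sum_v|\mathsf{in}[v]|=O(kn)$ together with counting sort on bounded keys.
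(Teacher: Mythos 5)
Your proposal is correct and follows essentially the same route as the paper: order the vertices within each class by the endpoints of the in-neighborhood interval (the paper's $R(v)=(p,q)$ is your $(\ell(v),h(v))$), then argue the output differs from a true feasible ordering only by swapping vertices with identical in- and out-neighborhoods. Your final paragraph is an unpacked version of the paper's observation that this matching is a graph automorphism, so composing the realization with it yields a realization of the new ordering.
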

\begin{proof}
Our algorithm for computing a feasible vertex ordering of $G$ is presented in Algorithm~\ref{alg-vertexorder}.
First, for each $v \in V(G)$, we compute a pair $R(v) = (p,q)$ where $p \in [r]$ (resp., $q \in [r]$) is the minimum (resp., maximum) index such that $C_p \subseteq \mathsf{in}[v]$ (resp., $C_q \subseteq \mathsf{in}[v]$).
We define a partial order $\preceq$ on index-pairs by setting $(p,q) \preceq (p',q')$ if $p+q \leq p'+q'$.
Then we order the vertices in each class $C_i$ as $(u_1,\dots,u_{|C_i|})$ such that $R(u_1) \preceq \cdots \preceq R(u_{|C_i|})$.
If there exist $u,v \in C_i$ such that $R(u)$ and $R(v)$ are incomparable under the order $\preceq$, then we just arbitrarily order the vertices in $C_i$.
We will see later that in this case, $(C_1,\dots,C_r)$ is not a feasible class ordering of $G$.
By concatenating the orderings of $C_1,\dots,C_r$, we obtain an ordering $(v_1,\dots,v_n)$ of $V(G)$.

We first show that Algorithm~\ref{alg-vertexorder} can be implemented in $O(kn)$ time.
Computing $R(v)$ for all $v \in V(G)$ can be done in $O(kn)$ time, because $|\mathsf{in}[v]| = O(k)$ by (ii) of Observation~\ref{obs-vertex}.
To see the time cost for ordering the vertices in each $C_i$, observe that $|C_i| \leq k$.
Indeed, $v \in \mathsf{out}[v] = \mathsf{out}[C_i]$ for all $v \in C_i$, which implies $C_i \subseteq \mathsf{out}[C_i]$ and thus $|C_i| \leq |\mathsf{out}[C_i]| = k$.
If we sort the vertices in $C_i$ directly, it takes $O(k \log k)$ time.
One can improve the time cost to $O(k)$ by observing that for any $v \in C_i$, the pair $R(v) = (p,q)$ satisfies $p \geq i-k$ and $q \leq i+k$, by condition~2 of Observation~\ref{obs-vertex}, which implies $2i-2k \leq p+q \leq 2i+2k$.
In other words, in the sorting task, the keys are all in the range $\{2i-2k,\dots,2i+2k\}$, whose size is $O(k)$.
Thus, the task can be done in $O(k)$ time using bucket sorting.
It follows that Algorithm~\ref{alg-vertexorder} can be implemented in $O(kn)$ time.

To see the correctness of our algorithm, suppose $(C_1,\dots,C_r)$ is a feasible class ordering of $G$, and let $(v_1^*,\dots,v_n^*)$ be a corresponding feasible vertex ordering of $G$.
We do not necessarily have $(v_1,\dots,v_n) = (v_1^*,\dots,v_n^*)$.
However, as we will see, it holds that $(R(v_1),\dots,R(v_n)) = (R(v_1^*),\dots,R(v_n^*))$, which turns out to be sufficient.
By (ii) of Observation~\ref{obs-vertex}, for each $v \in V(G)$ with $R(v) = (p,q)$, we have $\mathsf{in}[v] = \bigcup_{i=p}^q C_i$, and furthermore, $R(v_1^*) \preceq \cdots \preceq R(v_n^*)$.
Now consider a class $C_i$.
Note that $C_i = \{v_\alpha,\dots,v_\beta\} = \{v_\alpha^*,\dots,v_\beta^*\}$, where $\alpha = 1+\sum_{j=1}^{i-1} |C_j|$ and $\beta = \sum_{j=1}^i |C_j|$.
We have $R(v_\alpha^*) \preceq \cdots \preceq R(v_\beta^*)$, and our algorithm guarantees that $R(v_\alpha) \preceq \cdots \preceq R(v_\beta)$.
Therefore, $(R(v_\alpha),\dots,R(v_\beta)) = (R(v_\alpha^*),\dots,R(v_\beta^*))$.
It then follows that $(R(v_1),\dots,R(v_n)) = (R(v_1^*),\dots,R(v_n^*))$.

To see $(v_1,\dots,v_n)$ is a feasible vertex ordering of $G$, we further observe that the function $\pi:V(G) \rightarrow V(G)$ defined as $\pi(v_i) = v_i^*$ for $i \in [n]$ is an automorphism of $G$.
Consider indices $i,j \in [n]$.
We have $\mathsf{out}[v_i] = \mathsf{out}[v_i^*]$, since $v_i$ and $v_i^*$ belong to the same class in $\mathcal{C}_G$.
Thus, $(v_i,v_j) \in E(G)$ iff $(v_i^*,v_j) \in E(G)$.
On the other hand, because $R(v_j) = R(v_j^*)$, we have $\mathsf{in}[v_j] = \mathsf{in}[v_j^*]$ and hence $(v_i^*,v_j) \in E(G)$ iff $(v_i^*,v_j^*) \in E(G)$.
As such, $(v_i,v_j) \in E(G)$ iff $(v_i^*,v_j^*) \in E(G)$, which implies that $\pi$ is an automorphism of $G$.
Now consider a $k$NN-realization $\phi:V(G) \rightarrow \mathbb{R}^1$ of $G$ in $\mathbb{R}^1$ satisfying $\phi(v_1^*) < \cdots < \phi(v_n^*)$.
Set $\phi' = \phi \circ \pi$, which is a map from $V(G)$ to $\mathbb{R}^1$.
As $\pi$ is an automorphism of $G$, $\phi'$ is also a $k$NN-realization of $G$.
Furthermore, $\phi'(v_i) = \phi(\pi(v_i)) = \phi(v_i^*)$ for all $i \in [n]$, which implies $\phi'(v_1) < \cdots < \phi'(v_n)$.
The existence of $\phi'$ shows that $(v_1,\dots,v_n)$ is a feasible vertex ordering of $G$.
\end{proof}

\begin{algorithm}[htbp]
    \caption{\textsc{VertexOrder}$(G,(C_1,\dots,C_r))$}
    \begin{algorithmic}[1]
        \For{$v \in V(G)$}
            \State $p \leftarrow \min\{i \in [r]: C_i \subseteq \mathsf{in}[v]\}$
            \State $q \leftarrow \max\{i \in [r]: C_i \subseteq \mathsf{in}[v]\}$
            \State $R(v) \leftarrow (p,q)$
            \begin{nolinenumbers}
            \end{nolinenumbers}
        \EndFor
        \For{$i = 1,\dots,r$}
            \State $L_i \leftarrow$ an ordering $(u_1,\dots,u_{|C_i|})$ of $C_i$ satisfying $R(u_1) \preceq \cdots \preceq R(u_{|C_i|})$
            \begin{nolinenumbers}
            \end{nolinenumbers}
        \EndFor
        \State \textbf{return} $L_1+\cdots+L_r$
    \end{algorithmic}
    \label{alg-vertexorder}
\end{algorithm}

\begin{theorem} \label{thm-ordering}
Given a $k$-regular directed graph $G$ of $n$ vertices, one can compute in $O(kn)$ time an ordering $(v_1,\dots,v_n)$ of $V(G)$, which is a feasible vertex ordering of $G$ if $G$ is $k$NN-realizable in $\mathbb{R}^1$.
\end{theorem}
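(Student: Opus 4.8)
The plan is to assemble the theorem from the three algorithms built above, after a preliminary reduction to the weakly-connected case.

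\textbf{Reduction to a weakly-connected graph.} First I would split $G$ into its weakly-connected components $C_1,\dots,C_m$. Each $C_j$ is again $k$-regular, since every out-edge of a vertex stays inside its own component. Moreover, if $G$ is $k$NN-realizable in $\mathbb{R}^1$ via a map $\phi$, then $\phi$ restricted to $V(C_j)$ is a $k$NN-realization of $C_j$: for $v\in V(C_j)$ the $k$ out-neighbors of $v$ are exactly the $k$ points of $\phi(V(G))$ nearest $\phi(v)$, and these all lie in $\phi(V(C_j))$, hence they are also the $k$ nearest points within $\phi(V(C_j))$. So each $C_j$ is realizable in $\mathbb{R}^1$. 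Conversely, given feasible vertex orderings of the $C_j$, I would place the components in pairwise far-apart intervals on the line (in any order); because $|V(C_j)|\ge k+1$ the $k$ nearest neighbors of any point stay inside its own interval, and because there is no edge between components, the concatenation of the component orderings is a feasible vertex ordering of $G$. Thus it suffices to handle one weakly-connected component, and since the time bounds are linear they sum to $O(kn)$ over all components.

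\textbf{Pipeline within a component.} Assuming $G$ is weakly-connected, I would run in sequence: (1) Algorithm~\textsc{Classify} (Lemma~\ref{lem-classify}) to obtain a partition $\mathcal{C}$ of $V(G)$; (2) Algorithm~\textsc{ClassOrder} on $(G,\mathcal{C})$ (Lemma~\ref{lem-classorder}) to obtain an ordering of $\mathcal{C}$; (3) Algorithm~\textsc{VertexOrder} on $G$ together with that ordering (Lemma~\ref{lem-vertexorder}) to obtain an ordering $(v_1,\dots,v_n)$ of $V(G)$, which I return. If $G$ is $k$NN-realizable in $\mathbb{R}^1$, then Lemma~\ref{lem-classify} gives $\mathcal{C}=\mathcal{C}_G$; feeding the correct $\mathcal{C}_G$ into \textsc{ClassOrder} yields (by Lemma~\ref{lem-classorder}) a feasible class ordering; and feeding a feasible class ordering into \textsc{VertexOrder} yields (by Lemma~\ref{lem-vertexorder}) a feasible vertex ordering. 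Each of the three steps runs in $O(kn)$ time, so the whole procedure is $O(kn)$.

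\textbf{The one subtlety, and where the real work lies.} When $G$ is \emph{not} realizable, the intermediate objects carry no meaning --- $\mathcal{C}$ need not equal $\mathcal{C}_G$, and the ordering returned by \textsc{ClassOrder} need not be a genuine class ordering --- but the theorem only demands a correct output in the realizable case, so all that remains is to check that on \emph{arbitrary} $k$-regular input each of the three algorithms still terminates within its stated $O(kn)$ bound and returns a well-formed object (a partition, then an ordering of that partition, then an ordering of $V(G)$). This follows by inspecting the algorithms: \textsc{Classify}'s while-loop strictly shrinks $V$; the sub-routine \textsc{Find} inside \textsc{ClassOrder} strictly shrinks $\mathcal{A}$ in every non-terminating iteration; and \textsc{VertexOrder} makes a fixed number of passes, using bucket sort on keys confined to an $O(k)$-size range. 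None of the running-time analyses in Lemmas~\ref{lem-classify}--\ref{lem-vertexorder} invokes realizability, so the bounds are unconditional. I expect no serious obstacle in this last step: the substantive content --- the block structure of in- and out-neighborhoods (Observation~\ref{obs-vertex}) and the greedy reconstruction of the class and vertex orders --- has already been discharged in the preceding lemmas, and the present theorem is essentially their composition together with the component-concatenation argument above.
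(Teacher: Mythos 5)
Your proposal matches the paper's proof: the paper likewise obtains the theorem by composing Lemmas~\ref{lem-classify}, \ref{lem-classorder}, and~\ref{lem-vertexorder} for the weakly-connected case and concatenating the orderings of the weakly-connected components otherwise. Your additional checks (that each component is $k$-regular and realizable, and that the algorithms remain well-defined and within the time bound on non-realizable inputs) are correct elaborations of details the paper leaves implicit.
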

\begin{proof}
For the case where $G$ is weakly-connected, the theorem follows directly from Lemmas~\ref{lem-classify}, \ref{lem-classorder} and \ref{lem-vertexorder}.
Otherwise, we compute the orderings for the weakly-connected components of $G$ individually and concatenate them; this gives us the desired ordering of $V(G)$.
\end{proof}

\subsection{Deciding the realizability and computing the realization}

To decide the $k$NN-realizability of $G$, we first run the algorithm of Theorem~\ref{thm-ordering}, which returns the vertex ordering $(v_1,\dots,v_n)$ of $V(G)$, and then run the linear program given at the beginning of Section~\ref{sec-1D} for this ordering.
Either the LP is feasible, in which case the solution gives a $k$NN-realization of $G$ in $\mathbb{R}^1$, or LP is infeasible, in which case we can conclude that $(v_1,\dots,v_n)$ is not a feasible vertex ordering and thus $G$ is not $k$NN-realizable in $\mathbb{R}^1$. Thus, the $k$NN-realization problem in $\mathbb{R}^1$ can be solved in polynomial time.

Interestingly, we can show that if we are only interested in the \emph{decision} problem (and not actual embedding), then solving the LP is not necessary.
Specifically, we can decide whether $(v_1,\dots,v_n)$ is a feasible vertex by simply checking condition 1 of Observation~\ref{obs-vertex}.
If the ordering satisfies that condition, then the LP is \emph{always feasible}.

\begin{lemma} \label{lem-LP}
Let $G$ be a $k$-regular directed graph of $n$ vertices.
An ordering $(v_1,\dots,v_n)$ of $V(G)$ is a feasible vertex ordering of $G$ iff it satisfies condition 1 of Observation~\ref{obs-vertex}.
In particular, one can decide whether a given ordering of $V(G)$ is a feasible vertex ordering of $G$ or not in $O(kn)$ time.
\end{lemma}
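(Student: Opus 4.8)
The ``only if'' direction needs nothing new: if $(v_1,\dots,v_n)$ is a feasible vertex ordering, then condition~1 of Observation~\ref{obs-vertex} holds by that observation. The ``in particular'' clause is also routine, and I would dispose of it first: given an ordering, for each $i$ read off $\mathsf{out}[v_i]$, set $p_i$ to the smallest index occurring in it, check in $O(k)$ time that $\mathsf{out}[v_i]=\{v_{p_i},\dots,v_{p_i+k}\}$ and $p_i\le i\le p_i+k$, and then check $p_1\le\cdots\le p_n$ in $O(n)$ time; all these checks pass iff the ordering satisfies condition~1, hence (by the lemma) iff it is feasible. So the real content is the ``if'' direction.

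So assume the ordering satisfies condition~1, with witnesses $p_1\le\cdots\le p_n$. By the LP at the start of Section~\ref{sec-1D} it suffices to exhibit reals $x_1<\cdots<x_n$ for which $\phi(v_i):=x_i$ is a $k$NN-realization; since $G$ is $k$-regular this amounts to requiring, for every $i$, that every vertex of $\mathsf{out}[v_i]$ be strictly closer to $x_i$ than every vertex outside $\mathsf{out}[v_i]$. Because $\mathsf{out}[v_i]=\{v_{p_i},\dots,v_{p_i+k}\}$ and the $x$'s are increasing, the vertex of $\mathsf{out}[v_i]$ farthest from $x_i$ is $v_{p_i}$ or $v_{p_i+k}$, and the vertex outside $\mathsf{out}[v_i]$ nearest to $x_i$ is $v_{p_i-1}$ or $v_{p_i+k+1}$. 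Hence it suffices that, for each $i$,
\begin{equation*}
  x_{p_i-1}+x_{p_i+k}\;<\;2x_i\;<\;x_{p_i}+x_{p_i+k+1},
\end{equation*}
the left inequality being dropped if $p_i=1$ and the right one if $p_i+k=n$ (and the left being automatic when $i=p_i+k$, the right when $i=p_i$). Thus the whole problem reduces to feasibility of these $O(n)$ ``midpoint'' inequalities in the increasing variables $x_i$.

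To establish feasibility I would argue constructively through the gaps $d_t:=x_{t+1}-x_t>0$. In terms of the $d_t$, the left inequality at $i$ says the block of consecutive gaps $d_{p_i-1},\dots,d_{i-1}$ outweighs the adjacent block $d_i,\dots,d_{p_i+k-1}$, and the right one says $d_i,\dots,d_{p_i+k}$ outweighs $d_{p_i},\dots,d_{i-1}$; in each case the two blocks tile one window of $k+1$ consecutive gaps, split at position $i$. I then give the gaps widely separated magnitudes $d_t=M^{\pi(t)}$ for a sufficiently large constant $M=M(k)$ and a priority function $\pi$ on the cuts $\{1,\dots,n-1\}$ chosen so that in each constraint some cut on the ``heavy'' side of the split has strictly larger priority than every cut on the ``light'' side; that single dominant gap then forces the inequality. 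Concretely $\pi$ is built greedily: repeatedly pick a cut that never lies on the light side of a still-unsatisfied constraint, assign it the next-highest priority, and mark satisfied every constraint having that cut on its heavy side; a tiny perturbation at the end makes $x_1\le\cdots\le x_n$ strict.

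The main obstacle --- essentially the entire content of the ``if'' direction --- is proving this greedy never gets stuck, i.e.\ that at every stage such a cut exists. This is exactly where the monotone sliding-window structure $p_1\le\cdots\le p_n$ together with $p_i\le i\le p_i+k$ has to be used: it forces the constraint windows to nest coherently, so that the induced dominance relation on cuts is acyclic, with the extreme ends of the ordering supplying the initial cuts (and for disconnected $G$ one would first handle the weakly-connected components separately and place them far apart, which is what licenses assuming $p_{i+1}\le p_i+k$). An essentially equivalent route, perhaps cleaner to phrase, is to prove feasibility of the midpoint system directly by a Farkas/Motzkin transposition argument, again using the monotonicity of $(p_i)$ to rule out an infeasibility certificate. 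Everything else --- the reduction to the $O(n)$ midpoint inequalities, the boundary cases, the perturbation, and the $O(kn)$ checking procedure --- is routine.
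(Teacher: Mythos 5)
You set up the ``if'' direction exactly as the paper does: pass to gap variables, observe that for each $i$ only the two cross inequalities $x_{p_i-1}+x_{p_i+k}<2x_i<x_{p_i}+x_{p_i+k+1}$ matter, and note that each becomes a ``block $A$ of consecutive gaps outweighs adjacent block $B$'' constraint inside a window of $k+1$ gaps split at $i$. That reduction is correct, and the ``only if'' direction and the $O(kn)$ check are fine.

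But you then stop at the crux. You yourself identify that ``essentially the entire content of the `if' direction'' is showing the greedy priority assignment never gets stuck (equivalently, that no Farkas certificate exists), and for that step you offer only the assertion that the monotonicity $p_1\le\cdots\le p_n$ ``forces the constraint windows to nest coherently, so that the induced dominance relation on cuts is acyclic.'' That is not a proof, and it is not an obvious fact: a cut $d_j$ can lie on the heavy side of one constraint and the light side of another, and the claim that some cut always avoids the light side of every unsatisfied constraint is precisely the feasibility of the LP in disguise. The paper's proof spends almost all of its length on exactly this point: it applies Farkas' lemma to the system $A\mathbf{x}\le\mathbf{b}$, supposes a certificate $\mathbf{y}\ge\mathbf{0}$, $\mathbf{y}\ne\mathbf{0}$, $A^T\mathbf{y}\ge\mathbf{0}$ exists, and runs a careful induction on $i$ maintaining the invariant that some column partial sum $S_{i,j}=\sum_{x\le i}(a^-_{x,j}y^-_x+a^+_{x,j}y^+_x)$ is negative and strictly dominates (in absolute value) all partial sums in columns $j'>i$; the monotonicity of $(p_i)$ enters at several specific places (e.g.\ to conclude $S_{i,j'}=0$ for $j'>p_i+k$, and to propagate negativity of $S_{i,j}$ for $j<p_i$ to $S_{n,j}$). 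Until you supply an argument of comparable substance --- either a termination proof for your greedy or the dual certificate analysis --- the lemma is not proved. (A small side issue: the inequality $p_{i+1}\le p_i+k$ you invoke for the disconnected case is not part of condition~1 and is not needed; the paper's argument uses only $p_1\le\cdots\le p_n$ and $p_i\le i\le p_i+k$.)
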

\begin{proof}
The ``only if'' direction follows from Observation~\ref{obs-vertex}, so we only need to show the ``if'' direction.
Suppose $(v_1,\dots,v_n)$ satisfies condition 1 of Observation~\ref{obs-vertex}.
So there exist $p_1,\dots,p_n \in [n-k]$ such that \textnormal{\bf (i)} $p_1 \leq \cdots \leq p_n$, \textnormal{\bf (ii)} $p_i \leq i \leq p_i+k$ for all $i \in n$, and \textnormal{\bf (iii)} $\mathsf{out}[v_i] = \{v_{p_i},\dots,v_{p_i+k}\}$ for all $i \in [n]$.
To show $(v_1,\dots,v_n)$ is a feasible vertex ordering of $G$, it is equivalent to showing that the corresponding LP has a feasible solution.
It turns out that using the LP at the beginning of Section~\ref{sec-1D} is not very convenient.
So we shall formulate an alternative LP as follows.
For each $i \in [n-1]$, we introduce a variable $\Delta_i$, indicating the distance between $\phi(v_{i-1})$ and $\phi(v_i)$, where $\phi:V(G) \rightarrow \mathbb{R}^1$ is the $k$NN-realization we look for.
Then the distance between $\phi(v_\alpha)$ and $\phi(v_\beta)$ for $\alpha,\beta \in [n]$ with $\alpha \leq \beta$ can be represented as $\sum_{i=\alpha}^{\beta-1} \Delta_i$.
There are two types of constraints.
First, we need $\Delta_i > 0$ for all $i \in [n-1]$.
Second, for each $i \in [n]$, in order to guarantee that the $k$ nearest neighbors of $\phi(v_i)$ are $\phi(v_{p_i}),\dots,\phi(v_{p_i+k})$, we need the constraints $\sum_{j=p_i}^{i-1} \Delta_j < \sum_{j=i}^{p_i+k} \Delta_j$ and $\sum_{j=p_i-1}^{i-1} \Delta_j > \sum_{j=i}^{p_i+k-1} \Delta_j$.
These constraints guarantee that $\phi(v_{p_i})$ is closer to $\phi(v_i)$ than $\phi(v_{p_i+k+1})$ and $\phi(v_{p_i+k})$ is closer to $\phi(v_i)$ than $\phi(v_{p_i-1})$, which implies that the $k$ nearest neighbors of $\phi(v_i)$ are $\phi(v_{p_i}),\dots,\phi(v_{p_i+k})$, in the case $\phi(v_1) < \cdots < \phi(v_n)$.
Clearly, if this LP has a feasible solution $\Delta_1^*,\dots,\Delta_{n-1}^*$, then the map $\phi:V(G) \rightarrow \mathbb{R}^1$ defined as $\phi(v_i) = \sum_{j=1}^{i-1} \Delta_j^*$ for $i \in [n]$ is a $k$NN-realization of $G$ in $\mathbb{R}^1$.

We prove that the LP \textit{always} has a feasible solution, which implies that $(v_1,\dots,v_n)$ is a feasible vertex ordering of $G$.
We shall apply the well-known Farkas' lemma (specifically the variant (ii) of Proposition 6.4.3 in \cite{farkas-book}).
To this end, we need to slightly modify the constraints so that the LP can be expressed in the form $A \mathbf{x} \leq \mathbf{b}$ subject to $\mathbf{x} \geq 0$ for some matrix $A$ and vector $\mathbf{b}$.
First, we change the first type of constraints from $\Delta_i > 0$ to $\Delta_i \geq 0$.
Then we change the second type of constraints for each $v_i$ to $\sum_{j=p_i}^{i-1} \Delta_j - \sum_{j=i}^{p_i+k} \Delta_j \leq -1$ and $\sum_{j=i}^{p_i+k-1} \Delta_j - \sum_{j=p_i-1}^{i-1} \Delta_j \leq -1$.
Note that if the new LP has a feasible solution, then increasing each $\Delta_i$ by $\frac{1}{n}$ gives us a feasible solution of the original LP.
So it suffices to show that the new LP always has a feasible solution.
For convenience, we call $\sum_{j=p_i}^{i-1} \Delta_j - \sum_{j=i}^{p_i+k} \Delta_j \leq -1$ (resp., $\sum_{j=i}^{p_i+k-1} \Delta_j - \sum_{j=p_i-1}^{i-1} \Delta_j \leq -1$) the \textit{left} (resp., \textit{right}) \textit{constraint} for $v_i$.
If we encode the variables in a vector $\mathbf{x} = (\Delta_1,\dots,\Delta_{n-1})^T$, then each of the left/right constraints can be expressed as $\mathbf{a}^T \mathbf{x} \leq -1$ for some vector $\mathbf{a} \in \{-1,0,1\}^{n-1}$.
Therefore, the entire LP can be expressed as $A \mathbf{x} \leq \mathbf{b}$ subject to $\mathbf{x} \geq \mathbf{0}$, where $\mathbf{b} = (-1,\dots,-1)^T \in \mathbb{R}^{2n}$ and $A$ is a $2n \times (n-1)$ matrix whose row vectors are the $\mathbf{a}$-vectors for the left/right constraints.
Farkas' lemma states that there exists $\mathbf{x} \geq \mathbf{0}$ such that $A \mathbf{x} \leq \mathbf{b}$ iff there does \textit{not} exist a $2n$-dimensional vector $\mathbf{y} \geq \mathbf{0}$ such that $A^T \mathbf{y} \geq \mathbf{0}$ and $\mathbf{b}^T \mathbf{y} < 0$.
Note that if $\mathbf{y} \geq \mathbf{0}$, then $\mathbf{b}^T \mathbf{y} < 0$ is equivalent to saying that at least one entry of $\mathbf{y}$ is strictly positive, because $\mathbf{b} = (-1,\dots,-1)^T$.
Thus, it suffices to show the non-existence of such a vector $\mathbf{y}$.
Assume there exists $\mathbf{y} \geq \mathbf{0}$ such that $\mathbf{y} \neq \mathbf{0}$ and $A^T \mathbf{y} \geq \mathbf{0}$.
In the product $A^T \mathbf{y}$, each entry of $\mathbf{y}$ corresponds to a column vector of $A^T$, i.e., a row vector of $A$, which in turn corresponds to the left or right constraint of $v_i$ for some $i \in [n]$.
Let $y_i^-$ (resp., $y_i^+$) denote the entry of $y$ corresponds to the left (resp., right) constraint of $v_i$, for $i \in [n]$.
Define $t \in [n]$ as the smallest index such that $y_t^- > 0$ or $y_t^+ > 0$, which exists because $\mathbf{y} \geq \mathbf{0}$ and $\mathbf{y} \neq \mathbf{0}$.
For $i \in [n]$ and $j \in [n-1]$, we denote by $a_{i,j}^-$ (resp., $a_{i,j}^+$) the entry of $A$ in the row corresponding to $y_i^-$ (resp., $y_i^+$) and the column corresponding to the variable $\Delta_j$.
According to the constraints of the LP, we have
\begin{equation*}
    a_{i,j}^- = \left\{
    \begin{array}{ll}
        1 & \text{if } j \in \{p_i,\dots,i-1\}, \\
        -1 & \text{if } j \in \{i,\dots,p_i+k\}, \\
        0 & \text{otherwise},
    \end{array}
    \right.
    \text{ and \ }
    a_{i,j}^+ = \left\{
    \begin{array}{ll}
        1 & \text{if } j \in \{i,\dots,p_i+k-1\}, \\
        -1 & \text{if } j \in \{p_i-1,\dots,i-1\}, \\
        0 & \text{otherwise}.
    \end{array}
    \right.
\end{equation*}
We define $S_{i,j} = \sum_{x=1}^i (a_{x,j}^- y_x^- + a_{x,j}^+ y_x^+)$, for $i \in [n]$ and $j \in [n-1]$.
Clearly, for every $j \in [n-1]$, $S_{n,j}$ is the $j$-th entry of $A^T \mathbf{y}$, and hence $S_{n,j} \geq 0$.
On the other hand, we shall show below that $S_{n,j} < 0$ for some $j \in [n-1]$, which results in a contradiction.

Observe that for any $i \in [n]$ and $j \in [p_i-1]$, we have $S_{i,j} = S_{i+1,j} = \dots = S_{n,j}$.
Indeed, if $j < p_i$, then $j < p_{i'}$ for all $i' > i$ by the fact $p_1 \leq \cdots \leq p_n$, which implies that $a_{i',j}^- = a_{i',j}^+ = 0$ and thus $S_{i',j} = S_{i'-1,j}$.
Therefore, if there exists $i \in [n]$ and $j \in [p_i-1]$ such that $S_{i,j} < 0$, then $S_{n,j} < 0$ and we are done.
So assume $S_{i,j} \geq 0$ for all $i \in [n]$ and $j \in [p_i-1]$.
Under this assumption, we prove by induction that the following statement holds for all $i \in \{t,\dots,n\}$:
\begin{itemize}
    \item There exists $j \in [i]$ such that $S_{i,j} < 0$ and $S_{i,j'} < - S_{i,j}$ for any $j' \in \{i+1,\dots,n-1\}$.
\end{itemize}
The base case is $i = t$.
By the choice of $t$, we have $S_{t,j} = a_{t,j}^- y_t^- + a_{t,j}^+ y_t^+$ for all $j \in [n-1]$.
However, it must be that $y_t^+ = 0$, and thus $y_t^- > 0$.
Indeed, if $y_t^+ > 0$, consider for $j'' = p_t-1$ that $a_{t,j''}^- = 0$, $a_{t,j''}^+ = -1$, and thus $S_{t,j''} = -y_t^+ < 0$, which contradicts our assumption that $S_{i,j} \geq 0$ for $j \in [p_i - 1]$.
Now set $j = i$.
Then $a_{t,j}^- = -1$ and $a_{t,j}^+ = 1$.
Since $y_t^- > 0$ and $y_t^+ = 0$, we have $S_{t,j} = -y_t^- < 0$.
For any $j' \in \{i+1,\dots,n-1\}$, $S_{t,j'} = y_t^+-y_t^- = -y_t^-$ if $j' \leq p_t+k-1$, $S_{t,j'} = -y_t^-$ if $j' = p_t+k$, and $S_{t,j'} = 0$ if $j' > p_t+k$.
In all these cases, $S_{t,j'} \leq 0 < y_t^- = -S_{t,j}$.
So the statement holds for $i = t$.
Assume the statement holds for $i-1$.
Then there exists $j_0 \in [i-1]$ such that $S_{i-1,j_0} < 0$ and $S_{i-1,j'} < - S_{i-1,j_0}$ for any $j' \in \{i,\dots,n-1\}$.
We must have $j_0 \geq p_i$, because we have assumed that $S_{i,x} \geq 0$ for all $x \in [p_i-1]$.
As $p_i \leq j_0 \leq i-1$, we have $S_{i,j_0} = S_{i-1,j_0} + y_i^- - y_i^+$, or equivalently,
\begin{equation} \label{eq-Sij0}
    -S_{i,j_0} = -S_{i-1,j_0} - y_i^- + y_i^+.
\end{equation}
This equality will be used later.
To show the statement also holds for $i$, we first observe that if $S_{i,j} < 0$, then the condition $S_{i,j'} < -S_{i,j}$ for all $j' \in \{i+1,\dots,n-1\}$ holds directly.
Assume $S_{i,j} < 0$ and consider an index $j' \in \{i+1,\dots,n-1\}$.
The value of $S_{i,j'}$ can have the following three possibilities.
\begin{enumerate}
    \item $S_{i,j'} = S_{i-1,j'} + y_i^+ - y_i^-$ if $j' \leq p_i+k-1$.
    By our induction hypothesis, $S_{i-1,j'} < -S_{i-1,j}$.
    Thus, $S_{i,j'} < -S_{i-1,j} + y_i^+ - y_i^-$.
    By Equation~\ref{eq-Sij0}, this implies $S_{i,j'} < -S_{i,j}$.
    \item $S_{i,j'} = S_{i-1,j'}-y_i^-$ if $j' = p_i+k$.
    In this case, $S_{i,j'} \leq S_{i-1,j'} + y_i^+ - y_i^-$.
    So the same argument as in the first case shows $S_{i,j'} < -S_{i,j}$.
    \item $S_{i,j'} = S_{i-1,j'}$ if $j' > p_i+k$.
    Since $p_1 \leq \dots \leq p_i$, we have $j' > p_{i'} + k$ for all $i' \in [i]$.
    Thus, $S_{i,j'} = S_{i-1,j'} = \cdots = S_{0,j'} = 0$.
    As $S_{i,j} < 0$, we have $S_{i,j'} < -S_{i,j}$.
\end{enumerate}
So we always have $S_{i,j'} < -S_{i,j}$.
As such, it suffices to show the existence of $j \in [i]$ satisfying $S_{i,j} < 0$.
If $S_{i,j_0} < 0$, we can simply set $j = j_0$.
If $S_{i,j_0} \geq 0$, we set $j = i$.
We have $S_{i,i} = S_{i-1,i} - y_i^- + y_i^+$.
By our induction hypothesis, $S_{i-1,i} < -S_{i-1,j_0}$.
Thus, $S_{i,i} < -S_{i-1,j_0} - y_i^- + y_i^+$, which further implies $S_{i,i} < - S_{i,j_0} \leq 0$ by Equation~\ref{eq-Sij0}, i.e., $S_{i,j} < 0$.
As a result, the statement holds for $i$.
Applying the statement with $i = n$, we see the existence of $j \in [n]$ with $S_{n,j} < 0$, resulting in a contradiction.
Therefore, $(v_1,\dots,v_n)$ is a feasible vertex ordering of $G$ iff it satisfies condition~1 of Observation~\ref{obs-vertex}.

Finally, given an ordering $(v_1,\dots,v_n)$ of $V(G)$, one can easily test in $O(kn)$ time whether it satisfies condition~1 of Observation~\ref{obs-vertex}, which indicates whether it is a feasible vertex ordering of $G$.
This completes the proof of the lemma.
\end{proof}
\noindent
We can now state the main result of this section.

\begin{theorem} \label{thm:final}
Given a $k$-regular directed graph $G$ of $n$ vertices, one can decide in $O(kn)$ time whether $G$ is $k$NN-realizable in $\mathbb{R}^1$, and if so, a $k$NN-realization of $G$ in $\mathbb{R}^1$ can be computed in $O(n^{2.5} \mathsf{poly}(\log n))$ time.
\end{theorem}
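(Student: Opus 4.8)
\emph{Plan.} For the decision problem I would assemble Theorem~\ref{thm-ordering} with Lemma~\ref{lem-LP}; for the construction I would feed the auxiliary LP of Lemma~\ref{lem-LP} to a standard linear-programming solver. Concretely, first run the algorithm of Theorem~\ref{thm-ordering} on $G$, which in $O(kn)$ time returns an ordering $(v_1,\dots,v_n)$ of $V(G)$ that is a feasible vertex ordering of $G$ whenever $G$ is $k$NN-realizable in $\mathbb{R}^1$. By Lemma~\ref{lem-LP}, $(v_1,\dots,v_n)$ is a feasible vertex ordering if and only if it satisfies condition~1 of Observation~\ref{obs-vertex}; verifying this amounts to extracting, for each $i$, the candidate index $p_i$ from $\mathsf{out}[v_i]$ and checking $p_1 \le \cdots \le p_n$, $p_i \le i \le p_i+k$, and $\mathsf{out}[v_i] = \{v_{p_i},\dots,v_{p_i+k}\}$, all in $O(kn)$ time. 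If the check passes, $(v_1,\dots,v_n)$ is feasible, hence $G$ is $k$NN-realizable; if it fails, then $(v_1,\dots,v_n)$ is not feasible, and by the guarantee of Theorem~\ref{thm-ordering} this can only occur when $G$ is not $k$NN-realizable. Either way the decision is made in $O(kn)$ time.

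For the construction, assume the check succeeded, so we hold a feasible ordering $(v_1,\dots,v_n)$ together with the indices $p_1,\dots,p_n$. I would then write down exactly the auxiliary LP from the proof of Lemma~\ref{lem-LP}: gap variables $\Delta_1,\dots,\Delta_{n-1}$ (with intended meaning $\Delta_i = \phi(v_{i+1}) - \phi(v_i)$), the positivity constraints $\Delta_i > 0$, and, for each $i \in [n]$, the two constraints $\sum_{j=p_i}^{i-1}\Delta_j < \sum_{j=i}^{p_i+k}\Delta_j$ and $\sum_{j=p_i-1}^{i-1}\Delta_j > \sum_{j=i}^{p_i+k-1}\Delta_j$. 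This is a feasibility LP with $n-1$ variables and only $O(n)$ constraints, all coefficients in $\{-1,0,1\}$ and all right-hand sides of $O(1)$ bits, so of total bit-size $O(n\log n)$; Lemma~\ref{lem-LP} guarantees it is feasible. Solving it (using the relax-to-$\Delta_i \ge 0$-then-shift-by-$\tfrac1n$ trick of that proof) for a feasible point $\Delta_1^*,\dots,\Delta_{n-1}^*$ and setting $\phi(v_i) = \sum_{j=1}^{i-1}\Delta_j^*$ yields a $k$NN-realization of $G$ in $\mathbb{R}^1$. For disconnected $G$ one does this component-by-component and places the components far apart, exactly as in the approximate-realization construction.

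The only quantitative point remaining is the running time $O(n^{2.5}\,\mathsf{poly}(\log n))$, which is precisely the cost of solving this LP exactly: an LP with $O(n)$ variables, $O(n)$ constraints, and $O(\log n)$-bit integer data can be solved and rounded to an exact basic feasible solution (whose entries have $\mathrm{poly}(n)$-bit size) within $O(n^{2.5}\,\mathsf{poly}(\log n))$ time by standard interior-point methods. I do not expect a real obstacle here: all of the combinatorial difficulty of the one-dimensional problem has already been discharged in Theorem~\ref{thm-ordering} and Lemma~\ref{lem-LP}, and the residual work is bookkeeping (recovering the $p_i$'s, assembling the LP) plus the choice of solver. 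If one wishes to avoid black-box LP machinery, the constraint matrix is banded — the two constraints for $v_i$ touch only $\Delta_j$ with $|j-i| = O(k)$ — and a direct scheme that repeatedly enlarges the current gaps to repair violated left/right constraints is a natural alternative, but the generic solver already meets the stated bound, so I would present that route.
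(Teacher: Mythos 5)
Your proposal is correct and follows essentially the same route as the paper: compute the candidate ordering via Theorem~\ref{thm-ordering}, decide realizability by the $O(kn)$ check of Lemma~\ref{lem-LP} (condition~1 of Observation~\ref{obs-vertex}), and if feasible solve the $O(n)$-variable, $O(n)$-constraint gap LP from the proof of Lemma~\ref{lem-LP} with an interior-point solver (the paper cites Lee and Sidford) to get the $O(n^{2.5}\,\mathsf{poly}(\log n))$ bound. The extra details you supply (extracting the $p_i$'s, bit-size accounting, component-by-component placement) are consistent with the paper's argument.
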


\begin{proof}
We first use Theorem~\ref{thm-ordering} to compute an ordering $(v_1,\dots,v_n)$ of $V(G)$ in $O(kn)$ time. By Lemma~\ref{lem-LP}, we decide in $O(kn)$ time whether $(v_1,\dots,v_n)$ is a feasible vertex ordering of $G$. In this way, we know whether $G$ is $k$NN-realizable in $\mathbb{R}^1$ or not in $O(kn)$ time. If $G$ is realizable, we solve the LP in the proof of Lemma~\ref{lem-LP} to obtain a $k$NN-realization of $G$ in $\mathbb{R}^1$. The LP has $O(n)$ variables and $O(n)$ constraints, so it can be solved in $O(n^{2.5} \mathsf{poly} (\log n))$ time using, for instance, the algorithm of Lee and Sidford~\cite{lee2015efficient}.
\end{proof}

\section{Concluding remarks and extensions}

We considered the problem of realizing a directed graph $G$ as an Euclidean $k$NN graph. Our key results are: (1) for any fixed $d$, we can efficiently embed at least a $1 - \varepsilon$ fraction of $G$'s edges in $\mathbb{R}^d$ or conclude that $G$ is not realizable, and (2) a linear time algorithm to decide if $G$ is realizable in $\mathbb{R}^1$.
Our theorems extend to the case where the neighbors of each vertex in $G$ are given as a \emph{ranked} list, meaning that the embedding must satisfy $||\phi(v) - \phi(u_i)|| < || \phi(v) - \phi(u_{i+1})||$, for $i=1, \ldots, k-1$, where $u_i$ is the $i$th nearest neighbor of $v$ (except in $\mathbb{R}^1$ where we need to solve the LP to decide if it is feasible).
Our approximation scheme also applies to other proximity graphs that meet the following conditions: (1) the graph can be partitioned into constant-sized components using a sublinear size separator, and (2) each component's edges can be embedded independently. For example, we can approximately embed Delaunay triangulations in the plane with maximum degree $k = O(n^\frac{1}{3})$. 

\bibliographystyle{plainurl}
\bibliography{main}

\end{document}